\definecolor{Gray}{gray}{0.9}
\newtheorem{theorem}{Theorem}
\newtheorem{lemma}[theorem]{Lemma}
\newtheorem{claim}[theorem]{Claim}
\newtheorem{remark}[theorem]{Remark}
\theoremstyle{definition}
\newcommand{\bbN}{\mathbb{N}}
\newcommand{\bbR}{\mathbb{R}}
\newcommand{\bbZ}{\mathbb{Z}}
\newcommand{\bB}{{\bm B}}
\newcommand{\bo}{{\bm o}}
\newcommand{\bx}{{\bm x}}
\newcommand{\by}{{\bm y}}
\newcommand{\bchi}{{\bm \chi}}
\newcommand{\bzero}{{\bm 0}}
\newcommand{\caB}{\mathcal{B}}
\newcommand{\caD}{\mathcal{D}}
\newcommand{\caG}{\mathcal{G}}
\newcommand{\caP}{\mathcal{P}}
\newcommand{\set}[1]{\{#1\}}
\newcommand{\E}{\mathop{\mathbf{E}}}
\newcommand{\ceil}[1]{\lceil #1 \rceil}
\newcommand{\floor}[1]{\lfloor #1 \rfloor}
\newcommand{\abs}[1]{\lvert#1\rvert}
\newcommand{\norm}[1]{\lVert#1\rVert}
\begin{document}
\nocopyright

\title{Non-monotone DR-Submodular Function Maximization\\ (Full version)}

\author{
  Tasuku Soma \\
  The University of Tokyo \\
  tasuku\_soma@mist.i.u-tokyo.ac.jp
  \And
  Yuichi Yoshida\\
  National Institute of Informatics, \emph{and}\\
  Preferred Infrastructure, Inc.\\
  yyoshida@nii.ac.jp
}
\maketitle

\begin{abstract}
  We consider non-monotone DR-submodular function maximization, where DR-submodularity (diminishing return submodularity) is an extension of submodularity for functions over the integer lattice based on the concept of the diminishing return property.
  Maximizing non-monotone DR-submodular functions has many applications in machine learning that cannot be captured by submodular set functions.
  In this paper, we present a $\frac{1}{2+\epsilon}$-approximation algorithm with a running time of roughly $O(\frac{n}{\epsilon}\log^2 B)$, where $n$ is the size of the ground set, $B$ is the maximum value of a coordinate, and $\epsilon > 0$ is a parameter.
  The approximation ratio is almost tight and the dependency of running time on $B$ is exponentially smaller than the naive greedy algorithm.
  Experiments on synthetic and real-world datasets demonstrate that our algorithm outputs almost the best solution compared to other baseline algorithms, whereas its running time is several orders of magnitude faster.
\end{abstract}



\section{Introduction}

Submodular functions have played a key role in various tasks in machine learning, statistics, social science, and economics.
A set function $f : 2^E \to \bbR$ with a ground set $E$ is \emph{submodular} if
\begin{align*}
  f(X \cup \set{e}) - f(X) \geq f(Y \cup \set{e}) - f(Y)
\end{align*}
for arbitrary sets $X,Y \subseteq E$ with $X \subseteq Y$, and an element $e \in E \setminus Y$.
The importance and usefulness of submodularity in these areas are due to the fact that submodular functions naturally capture the \emph{diminishing return property}.
Various important functions in these areas such as the entropy function, coverage function, and utility functions satisfy this property. See, e.g.,~\cite{Krause2014survey,Fujishige2005}.

Recently, \emph{maximizing} (non-monotone) submodular functions has attracted particular interest in the machine learning community.
In contrast to \emph{minimizing} submodular functions, which can be done in polynomial time, maximizing submodular functions is NP-hard in general.
However, we can achieve a constant factor approximation for various settings.
Notably,~\cite{Buchbinder:2012hc} presented a very elegant \emph{double greedy} algorithm for (unconstrained) submodular function maximization, which was the first algorithm achieving $\frac{1}{2}$-approximation, and this approximation ratio is tight~\cite{Feige2011}.
Applications of non-monotone submodular function maximization include efficient sensor placement~\cite{Krause2008a}, privacy in online services~\cite{Krause2008utility}, and maximum entropy sampling~\cite{Ko:1995hk}.

The models and applications mentioned so far are built upon submodular \emph{set functions}.
Although set functions are fairly powerful for describing problems such as variable selection, we sometimes face difficult situations that cannot be cast with set functions.
For example, in the budget allocation problem~\cite{Alon2012}, we would like to decide how much budget should be set aside for each ad source, rather than whether we use the ad source or not.
A similar issue arises when we consider models allowing multiple choices of an element in the ground set.

To deal with such situations, several generalizations of submodularity have been proposed.
\cite{Soma:2014tp} devised a general framework for maximizing \emph{monotone submodular functions on the integer lattice}, and showed that the budget allocation problem and its variant fall into this framework.
In their framework, functions are defined over the integer lattice $\bbZ_+^E$ and therefore effectively represent discrete allocations of budget.
Regarding the original motivation for the diminishing return property, one can naturally consider its generalization to the integer lattice:
a function $f : \bbZ_+^E \to \bbR$ satisfying
\begin{align*}
  f(\bx + \bchi_e) - f(\bx) \geq f(\by + \bchi_e) - f(\by)
\end{align*}
for $\bx \leq \by$ and $e \in E$, where $\bchi_e \in \bbR^E$ is the vector with $\bchi_e(e) = 1$ and $\bchi_e(a) = 0$ for every $a \neq e$.
Such functions are called \emph{diminishing return submodular (DR-submodular) functions}~\cite{sfcover:nips2015} or \emph{coordinate-wise concave submodular functions}~\cite{Milgrom2009}.
DR-submodular functions have found various applications in generalized sensor placement~\cite{sfcover:nips2015} and (a natural special case of) the budget allocation problem~\cite{Soma:2014tp}.

As a related notion,
a function is said to be \emph{lattice submodular} if
\[
f(\bx) + f(\by) \geq f(\bx \vee \by) + f(\bx \wedge \by)
\]
for arbitrary $\bx$ and $\by$, where $\vee$ and $\wedge$ are coordinate-wise max and min, respectively.
Note that DR-submodularity is \emph{stronger} than \emph{lattice submodularity} in general (see, e.g., \cite{Soma:2014tp}).
Nevertheless, we consider the DR-submodularity to be a ``natural definition'' of submodularity, at least for the applications mentioned so far, because the diminishing return property is crucial in these real-world scenarios.

\subsection{Our contributions}
We design a novel polynomial-time approximation algorithm for maximizing (non-monotone) DR-submodular functions.
More precisely, we consider the optimization problem
\begin{align}\label{eq:max-DRsfm}
  \begin{array}{ll}
    \text{maximize}& f(\bx)\\
    \text{subject to} & \bzero \leq \bx \leq \bB,
  \end{array}
\end{align}
where $f : \bbZ_+^E \to \bbR_+$ is a non-negative DR-submodular function, $\bzero$ is the zero vector, and $\bB \in \bbZ_+^E$ is a vector representing the maximum values for each coordinate.
When $\bB$ is the all-ones vector, this is equivalent to the original (unconstrained) submodular function maximization.
We assume that $f$ is given as an evaluation oracle; when we specify $\bx \in \bbZ_+^E$, the oracle returns the value of $f(\bx)$.

Our algorithm achieves $\frac{1}{2+\epsilon}$-approximation for any constant $\epsilon > 0$ in $O(\frac{|E|}{\epsilon} \cdot \log (\frac{\Delta}{\delta})\log B \cdot (\theta + \log B))$ time,
where $\delta$ and $\Delta$ are the minimum positive marginal gain and maximum positive values, respectively, of $f$, $B = \norm{\bB}_{\infty} := \max_{e \in E}\bB(e)$, and $\theta$ is the running time of evaluating (the oracle for) $f$.
To our knowledge, this is the first polynomial-time algorithm achieving (roughly) $\frac{1}{2}$-approximation.

We also conduct numerical experiments on the revenue maximization problem using real-world networks.
The experimental results show that the solution quality of our algorithm is comparable to other algorithms.
Furthermore, our algorithm runs several orders of magnitude faster than other algorithms when $B$ is large.

DR-submodularity is necessary for obtaining polynomial-time algorithms with a meaningful approximation guarantee;
if $f$ is only lattice submodular, then we cannot obtain constant-approximation in polynomial time.
To see this, it suffices to observe that an \emph{arbitrary} univariate function is lattice submodular, and therefore finding an (approximate) maximum value must invoke $O(B)$ queries.
We note that representing an integer $B$ requires $\ceil{\log_2 B}$ bits.
Hence, the running time of $O(B)$ is pseudopolynomial rather than polynomial.


\subsection*{Fast simulation of the double greedy algorithm}
Naturally, one can reduce the problem~\eqref{eq:max-DRsfm} to maximization of a submodular set function
by simply duplicating each element $e$ in the ground set into $\bB(e)$ distinct copies and defining a set function over the set of all the copies.
One can then run the double greedy algorithm~\cite{Buchbinder:2012hc} to obtain $\frac{1}{2}$-approximation.
This reduction is simple but has one large drawback; the size of the new ground set is $\sum_{e\in E} \bB(e)$, a pseudopolynomial in $B$.
Therefore, this naive double greedy algorithm does not scale to a situation where $B$ is large.

For scalability, we need an additional trick that reduces the pseudo-polynomial running time to a polynomial one.
In \emph{monotone} submodular function maximization on the integer lattice,~\cite{sfcover:nips2015,sfm:arxiv2015} provide such a speedup trick, which effectively combines the \emph{decreasing threshold technique}~\cite{Badanidiyuru2013} with binary search.
However, a similar technique does not apply to our setting, because the double greedy algorithm works differently from (single) greedy algorithms for monotone submodular function maximization.
The double greedy algorithm examines each element in a \emph{fixed} order and marginal gains are used to decide whether to include the element or not.
In contrast, the greedy algorithm chooses each element in \emph{decreasing} order of marginal gains, and this property is crucial for the decreasing threshold technique.

We resolve this issue by splitting the set of all marginal gains into polynomially many small intervals.
For each interval, we approximately execute multiple steps of the double greedy algorithm at once, as long as the marginal gains remain in the interval.
Because the marginal gains do not change (much) within the interval, this simulation can be done with polynomially many queries and polynomial-time overhead.
To our knowledge, this speedup technique is not known in the literature and is therefore of more general interest.

Very recently,~\cite{DBLP:journals/corr/EneN16} pointed out that a DR-submodular function $f:\set{0,1,\ldots,B}^E \to \bbR_+$ can be expressed as a submodular set function $g$ over a polynomial-sized ground set, which turns out to be $E \times \set{0,1,\ldots,k-1}$, where $k=\ceil{\log_2(B+1)}$.
Their idea is representing $\bx(e)$ in binary form for each $e \in E$, and bits in the binary representations form the new ground set.
We may want to apply the double greedy algorithm to $g$ in order to get a polynomial-time approximation algorithm.
However, this strategy has the following two drawbacks: (i) The value of $g(E \times \set{0,1,\ldots,k-1})$ is defined as $f(\bx)$, where $\bx(e) = 2^k-1$ for every $e \in E$. This means that we have to extend the domain of $f$.
(ii) More crucially, the double greedy algorithm on $g$ may return a large set such as $E \times \set{0,1,\ldots,k-1}$ whose corresponding vector $\bx \in \bbZ_+^E$ may violate the constraint $\bx \leq \bB$.
Although we can resolve these issues by introducing a knapsack constraint, it is not a practical solution because existing algorithms for knapsack constraints~\cite{Lee:2009tc,Chekuri:2014ed} are slow and have worse approximation ratios than $1/2$.


\paragraph{Notations}
For an integer $n \in \bbN$, $[n]$ denotes the set $\set{1,\ldots,n}$.
For vectors $\bx, \by \in \bbZ^E$, we define $f(\bx \mid \by) := f(\bx + \by) - f(\by)$.
The $\ell_1$-norm and $\ell_{\infty}$-norm of a vector $\bx \in \bbZ^E$ are defined as $\norm{\bx}_1 := \sum_{e \in E} \abs{\bx(e)}$ and $\norm{\bx}_{\infty} := \max_{e \in E} \abs{\bx(e)}$, respectively.



\section{Related work}
As mentioned above, there have been many efforts to maximize submodular functions on the integer lattice.
Perhaps the work most related to our interest is~\cite{Gottschalk:2015fq},
in which the authors considered maximizing lattice submodular functions over the bounded integer lattice and designed $1\over3$-approximation pseudopolynomial-time algorithm.
Their algorithm was also based on the double greedy algorithm, but does not include a speeding up technique, as proposed in this paper.

In addition there are several studies on the \emph{constrained} maximization of submodular functions~\cite{Feige2011,Buchbinder2014,Buchbinder2015}, although we focus on the unconstrained case.
Many algorithms for maximizing submodular functions are randomized, but a very recent work~\cite{Buchbinder2015} devised a derandomized version of the double greedy algorithm.
\cite{Gotovos2015} considered maximizing non-monotone submodular functions in the \emph{adaptive} setting, a concept introduced in~\cite{Golovin:2011cn}.

A continuous analogue of DR-submodular functions is considered in~\cite{DBLP:journals/corr/BianMB016}.


\section{Algorithms}\label{sec:algorithm}

In this section, we present a polynomial-time approximation algorithm for maximizing (non-monotone) DR-submodular functions.
We first explain a simple adaption of the double greedy algorithm for (set) submodular functions to our setting, which runs in pseudopolynomial time.
Then, we show how to achieve a polynomial number of oracle calls.
Finally, we provide an algorithm with a polynomial running time (details are placed in Appendix~\ref{sec:proof-truely-polynomial}).

\subsection{Pseudopolynomial-time algorithm}\label{subsec:pseudo-poly}

Algorithm~\ref{alg:pseudo-poly} is an immediate extension of the double greedy algorithm for maximizing submodular (set) functions~\cite{Buchbinder:2012hc} to our setting.
We start with $\bx = \bzero$ and $\by = \bB$, and then for each $e \in E$, we tighten the gap between $\bx(e)$ and $\by(e)$ until they become exactly the same.
Let $\alpha = f(\bchi_e \mid \bx)$ and $\beta = f(-\bchi_e \mid \by)$.
We note that
\[
\alpha + \beta = f(\bx + \bchi_e) - f(\bx) - (f(\by) - f(\by - \bchi_e)) \geq 0
\]
holds from the DR-submodularity of $f$.
Hence, if $\beta < 0$, then $\alpha > 0$ must hold, and we increase $\bx(e)$ by one.
Similarly, if $\alpha < 0$, then $\beta > 0$ must hold, and we decrease $\by(e)$ by one.
When both of them are non-negative, we increase $\bx(e)$ by one with probability $\frac{\alpha}{\alpha+\beta}$, or decrease $\by(e)$ by one with the complement probability $\frac{\beta}{\alpha+\beta}$.

\begin{algorithm}[t]
  \caption{Pseudopolynomial-time algorithm}\label{alg:pseudo-poly}
  \begin{algorithmic}[1]
    \REQUIRE{$f:\bbZ_+^E \to \bbR_+$, $\bB \in \bbZ_+^E$.}
    \ENSURE{$f$ is DR-submodular.}
    \STATE $\bx \leftarrow \bzero$, $\by \leftarrow \bB$.
    \FOR{$e \in E$}\label{line:pseudo-poly-e}
      \WHILE{$\bx(e) < \by(e)$}
        \STATE $\alpha \leftarrow f(\bchi_{e} \mid \bx)$ and $\beta \leftarrow f(-\bchi_{e} \mid \by )$. \label{line:pseudo-poly-a-b}
        \IF{$\beta < 0$}
          \STATE{$\bx(e) \leftarrow \bx(e) + 1$.}
        \ELSIF{$\alpha < 0$}
          \STATE  $\by(e) \leftarrow \by(e) - 1$.
        \ELSE
          \STATE{$\bx(e) \leftarrow \bx(e) + 1$ with probability $\frac{\alpha}{\alpha+\beta}$ and $\by(e) \leftarrow \by(e) - 1$ with the complement probability $\frac{\beta}{\alpha+\beta}$. If $\alpha = \beta = 0$, we assume $\frac{\alpha}{\alpha+\beta}=1$.}\label{line:pseudo-poly-else}
        \ENDIF
      \ENDWHILE
    \ENDFOR
    \RETURN $\bx$.
  \end{algorithmic}
\end{algorithm}

\begin{theorem}
  Algorithm~\ref{alg:pseudo-poly} is a $\frac{1}{2}$-approximation algorithm for~\eqref{eq:max-DRsfm} with time complexity $O(\|\bB\|_1 \cdot \theta + \|\bB\|_1)$,
  where $\theta$ is the running time of evaluating $f$.
\end{theorem}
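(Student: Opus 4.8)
The plan is to separate the running-time bound from the approximation ratio. For the running time, note that in the \texttt{while}-loop for a fixed coordinate $e$, every iteration either increments $\bx(e)$ or decrements $\by(e)$, so the gap $\by(e)-\bx(e)$ strictly decreases by $1$ each time; it starts at $\bB(e)$ and the loop halts when it hits $0$, so there are exactly $\bB(e)$ iterations for coordinate $e$ and $\sum_{e\in E}\bB(e)=\norm{\bB}_1$ iterations in total. Each iteration uses $O(1)$ oracle calls (to compute $\alpha$ and $\beta$) and $O(1)$ extra work, giving the claimed $O(\norm{\bB}_1\cdot\theta+\norm{\bB}_1)$.

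For the approximation guarantee I would mirror the double-greedy analysis of~\cite{Buchbinder:2012hc}, adapted to the lattice. Linearize the execution into $T=\norm{\bB}_1$ unit steps and let $\bx^{(t)},\by^{(t)}$ be the iterates after step $t$, so $\bx^{(0)}=\bzero$, $\by^{(0)}=\bB$, and $\bx^{(T)}=\by^{(T)}$ is the returned vector; one first records the invariant $\bzero\le\bx^{(t)}\le\by^{(t)}\le\bB$. Fix an optimal solution $\bo$ and define the coupled optimum $O^{(t)}:=(\bo\vee\bx^{(t)})\wedge\by^{(t)}$, so that $O^{(0)}=\bo$ and $O^{(T)}=\bx^{(T)}$. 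The core is the per-step inequality
\begin{align*}
&\E\bigl[f(O^{(t-1)})-f(O^{(t)})\mid \mathcal F_{t-1}\bigr]\\
&\quad\le \tfrac12\,\E\bigl[(f(\bx^{(t)})-f(\bx^{(t-1)}))+(f(\by^{(t)})-f(\by^{(t-1)}))\mid \mathcal F_{t-1}\bigr],
\end{align*}
where $\mathcal F_{t-1}$ is the history before step $t$. Taking total expectations and summing over $t=1,\dots,T$ makes both sides telescope: the left side becomes $f(\bo)-\E[f(\bx^{(T)})]$, the right side becomes $\E[f(\bx^{(T)})]-\tfrac12(f(\bzero)+f(\bB))\le \E[f(\bx^{(T)})]$ by nonnegativity of $f$, and rearranging yields $\E[f(\bx^{(T)})]\ge\tfrac12 f(\bo)$.

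To prove the per-step inequality I would condition on $\mathcal F_{t-1}$, so $\bx:=\bx^{(t-1)}$, $\by:=\by^{(t-1)}$ and the active coordinate $e$ are fixed, set $\alpha=f(\bchi_e\mid\bx)$, $\beta=f(-\bchi_e\mid\by)$, and recall $\alpha+\beta\ge 0$ from DR-submodularity. Split by the three branches of the algorithm. If $\beta<0$ the step is deterministic with $\bx^{(t)}=\bx+\bchi_e$, the right side is $\tfrac12\alpha>0$, and using $\bx\le O^{(t-1)}$ together with the clamp defining $O^{(t-1)}(e)=\min(\max(\bo(e),\bx(e)),\by(e))$ one shows $O^{(t)}=O^{(t-1)}$ when $\bo(e)>\bx(e)$, and otherwise $O^{(t)}=O^{(t-1)}+\bchi_e$ with $O^{(t-1)}\le\by-\bchi_e$, so DR-submodularity gives $f(\bchi_e\mid O^{(t-1)})\ge -\beta$ and hence the left side is at most $\beta<0$; the case $\alpha<0$ is symmetric (decrement $\by$, and a downward application of DR-submodularity from $\bx\le O^{(t-1)}-\bchi_e$). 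If $\alpha,\beta\ge0$ the step is randomized and the right side equals $\tfrac{\alpha^2+\beta^2}{2(\alpha+\beta)}$; here I case on $\bo(e)$: if $\bx(e)<\bo(e)<\by(e)$ both outcomes leave $O$ unchanged and the left side is $0$; if $\bo(e)\le\bx(e)$ only the ``increase $\bx$'' outcome (probability $\tfrac{\alpha}{\alpha+\beta}$) changes $O$, adding $\bchi_e$, and DR-submodularity bounds its contribution by $\tfrac{\alpha\beta}{\alpha+\beta}$; the case $\bo(e)\ge\by(e)$ is symmetric via the ``decrease $\by$'' outcome; in both nontrivial subcases $\tfrac{\alpha\beta}{\alpha+\beta}\le\tfrac{\alpha^2+\beta^2}{2(\alpha+\beta)}$ is just $2\alpha\beta\le\alpha^2+\beta^2$. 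I expect the only delicate part to be exactly this bookkeeping — tracking how the clamped coordinate of $O^{(t)}$ moves in each subcase and choosing the correct monotonicity direction of DR-submodularity to upper-bound the corresponding marginal — while the telescoping and the arithmetic are routine.
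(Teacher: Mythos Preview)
Your proposal is correct and follows exactly the approach the paper has in mind: the paper omits the proof of this theorem as ``a simple modification of the analysis of the original algorithm,'' but the structure you give---defining $O^{(t)}=(\bo\vee\bx^{(t)})\wedge\by^{(t)}$, proving the per-step inequality by case analysis on the sign of $\alpha,\beta$ and the position of $\bo(e)$ relative to $\bx(e),\by(e)$, then telescoping---is precisely what the paper carries out in Lemma~\ref{lem:key-lemma} and Theorem~\ref{the:polynomial-oracle} for the approximate version (your argument is that proof with $\epsilon=0$). The running-time count is also right.
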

We omit the proof as it is a simple modification of the analysis of the original algorithm.

\subsection{Algorithm with polynomially many oracle calls}\label{subsec:poly-oracle-call}

In this section, we present an algorithm with a  polynomial number of oracle calls.

Our strategy is to simulate Algorithm~\ref{alg:pseudo-poly} without evaluating the input function $f$ many times.
A key observation is that, at Line~\ref{line:pseudo-poly-a-b} of Algorithm~\ref{alg:pseudo-poly}, we do not need to know the exact value of $f(\bchi_e \mid \bx)$ and $f(-\bchi_e \mid \by)$; good approximations to them are sufficient to achieve an approximation guarantee close to $\frac{1}{2}$.
To exploit this observation, we first design an algorithm that outputs (sketches of) approximations to the functions $g(b) := f(\bchi_e \mid \bx + b\bchi_e)$ and $h(b) := f(-\bchi_e \mid \by-b\bchi_e)$.
Note that $g$ and $h$ are non-increasing functions in $b$ because of the DR-submodularity of $f$.

To illustrate this idea, let us consider a non-increasing function $\phi:\set{0,1,\ldots,B-1}\to \bbR$ and suppose that $\phi$ is non-negative ($\phi$ is either $g$ or $h$ later on).
Let $\delta$ and $\Delta$ be the minimum and the maximum positive values of $\phi$, respectively.
Then, for each $\delta \leq \tau \leq \Delta$ of the form $\delta (1+\epsilon)^k$, we find the minimum $b_\tau$ such that $\phi(b_\tau) < \tau$ (we regard $\phi(B) = -\infty$).
From the non-increasing property of $\phi$, we then have $\phi(b) \geq \tau$ for any $b < b_\tau$.
Using the set of pairs $\set{(\tau,b_\tau)}_{\tau}$, we can obtain a good approximation to $\phi$.
The details are provided in Algorithm~\ref{alg:approximate}.

\begin{algorithm}[t]
  \caption{Sketching subroutine for Algorithm~\ref{alg:polynomial-oracle}}\label{alg:approximate}
  \begin{algorithmic}[1]
    \REQUIRE{$\phi:\set{0,1,\ldots,B-1} \to \bbR$, $\epsilon > 0$.}
    \ENSURE{$\phi$ is non-increasing.}
    \STATE $S \leftarrow \emptyset$.
    We regard $\phi(B) = -\infty$.
    \STATE Find the minimum $b_0 \in \set{0,1,\ldots,B}$ with $\phi(b_0) \leq 0$ by binary search. \label{line:approximate-b}
    \IF[{$\phi$ has a positive value.}]{$b_0 \geq 1$}
      \STATE $\Delta \leftarrow \phi(0)$ and $\delta \leftarrow \phi(b_0-1)$.
      \FOR{($\tau \leftarrow \delta$; $\tau \leq \Delta$; $\tau \leftarrow (1+\epsilon)\tau$)}
        \STATE Find the minimum $b_\tau \in \set{0,1,\ldots,B}$ with $\phi(b_\tau) < \tau$ by binary search.
        \STATE $S \leftarrow S \cup \set{(b_\tau,\tau)}$
      \ENDFOR
        \IF{$b_\delta \neq B$}
        \STATE{$S \leftarrow S \cup \set{(B,0)}$.}
        \ENDIF
      \ELSE[$\phi$ is non-positive.]
      \STATE $S \leftarrow S \cup \set{(B,0)}$.
    \ENDIF
    \RETURN $S$.
  \end{algorithmic}
\end{algorithm}

\begin{lemma}\label{lem:approximate}
  For any $\phi:\set{0,1,\ldots,B-1}\to \bbR$ and $\epsilon > 0$, Algorithm~\ref{alg:approximate} outputs a set of pairs $\set{(b_\tau,\tau)}_\tau$ from which, for any $b \in \set{0,1,\ldots,B-1}$, we can reconstruct a value $v$ in $O(\log B)$ time such that $v \leq \phi(b) < (1+\epsilon)v$ if $\phi(b) > 0$ and $v = 0$ otherwise.
  The time complexity of Algorithm~\ref{alg:approximate} is $O(\frac{1}{\epsilon}\log (\frac{\Delta}{\delta}) \log B \cdot \theta)$ if $\phi$ has a positive value, where $\delta$ and $\Delta$ are the minimum and maximum positive values of $\phi$, respectively, and $\theta$ is the running time of evaluating $\phi$, and is $O(\log B \cdot \theta)$ otherwise.
\end{lemma}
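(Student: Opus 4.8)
The plan is to establish the two assertions of the lemma separately: first the correctness and the $O(\log B)$ reconstruction time of the sketch $S$, then the running time of Algorithm~\ref{alg:approximate}. Throughout I would lean on the single structural fact that, since $\phi$ is non-increasing and $b_\tau$ is the minimum index with $\phi(b_\tau) < \tau$, for every threshold $\tau$ considered by the algorithm we have $\phi(b) \ge \tau \iff b < b_\tau$, and consequently $b_\tau$ is non-increasing in $\tau$.

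For correctness I would fix the reconstruction map: on query $b \in \set{0,\ldots,B-1}$, return $v := \tau$ for the largest threshold with $(b_\tau,\tau) \in S$ and $b_\tau > b$, and $v := 0$ if no such pair exists. Assume first $\phi$ has a positive value, so $b_0 \ge 1$ and $S = \set{(b_{\tau_j},\tau_j) : 0 \le j \le K} \cup (\set{(B,0)}$ if $b_{\tau_0}\neq B)$ with $\tau_j = \delta(1+\epsilon)^j$ and $K$ maximal subject to $\tau_K \le \Delta$. Every positive value of $\phi$ lies in $[\delta,\Delta]$, because $\phi(0)=\Delta$ is the maximum and any $b$ with $\phi(b)>0$ has $b \le b_0-1$, hence $\phi(b) \ge \phi(b_0-1) = \delta$. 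If $\phi(b) > 0$, let $\tau^* = \tau_{j^*}$ be the largest threshold $\le \phi(b)$ (it exists since $\delta \le \phi(b) \le \Delta$); then $b < b_{\tau^*}$, so $v \ge \tau^*$, while monotonicity of $b_\tau$ together with either $\phi(b) < \tau_{j^*+1} = (1+\epsilon)\tau^*$ (if $j^* < K$) or $\phi(b) \le \Delta < (1+\epsilon)\tau_K$ (if $j^* = K$) forces $v = \tau^*$ and hence $v \le \phi(b) < (1+\epsilon)v$. If $\phi(b) \le 0$, then $\phi(b) < \tau_j$ for every $j$, so $b \ge b_{\tau_j}$ and $v = 0$. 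The degenerate case $b_0 = 0$ gives $S = \set{(B,0)}$ and $v = 0$ for all $b$, which is correct.

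For the reconstruction time, since $b_{\tau_0} \ge b_{\tau_1} \ge \cdots \ge b_{\tau_K}$ is monotone, the largest qualifying threshold is found by binary search; after discarding pairs with duplicate first coordinates $S$ has at most $B+2$ pairs, so this costs $O(\log B)$. For the running time of Algorithm~\ref{alg:approximate}, each binary search over $\set{0,\ldots,B}$ issues $O(\log B)$ oracle calls at cost $\theta$ apiece; Line~\ref{line:approximate-b} is one such search, so when $\phi$ is non-positive the total is $O(\log B \cdot \theta)$, and otherwise the for-loop runs $O(\tfrac1\epsilon\log\tfrac\Delta\delta)$ times (using $\ln(1+\epsilon) = \Omega(\epsilon/(1+\epsilon))$), each iteration a single binary search, giving $O(\tfrac1\epsilon\log\tfrac\Delta\delta \cdot \log B \cdot \theta)$.

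The only delicate point is the $\phi(b) > 0$ case of the correctness argument, where one must pin down exactly which threshold the reconstruction returns and, in particular, handle the boundary $j^* = K$ at which there is no successor threshold: there the factor-$(1+\epsilon)$ tightness comes not from two consecutive thresholds straddling $\phi(b)$ but from the loop's termination condition $(1+\epsilon)\tau_K > \Delta \ge \phi(b)$. Everything else — the $\phi(b) \le 0$ case and the counting of binary searches and oracle calls — is routine.
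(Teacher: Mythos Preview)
Your proposal is correct and follows essentially the same approach as the paper's own proof: the same reconstruction rule (return the threshold $\tau$ of the pair with $b_\tau > b$ that is extremal), the same correctness argument via the equivalence $\phi(b) \ge \tau \iff b < b_\tau$, and the same count of $O(\tfrac{1}{\epsilon}\log\tfrac{\Delta}{\delta})$ binary searches. If anything you are more careful than the paper, which glosses over both the boundary case $j^* = K$ and the justification that the reconstruction binary search costs $O(\log B)$ rather than $O(\log|S|)$.
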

\begin{proof}
  Let $S = \set{(b_\tau,\tau)}_\tau$ be the set of pairs output by Algorithm~\ref{alg:approximate}.
  Our reconstruction algorithm is as follows:
  Given $b \in \set{0,1,\ldots,B-1}$, let $(b_{\tau^*},\tau^*)$ be the pair with the minimum $b_{\tau^*}$, where $b < b_{\tau^*}$.
  Note that such a $b_{\tau^*}$ always exists because a pair of the form $(B,\cdot)$ is always added to $S$.
  We then output $\tau^*$.
  The time complexity of this reconstruction algorithm is clearly $O(\log B)$.

  We now show the correctness of the reconstruction algorithm.
  If $\phi(b) > 0$, then, in particular, we have $\phi(b) \geq \delta$.
  Then, $\tau^*$ is the maximum value of the form $\delta(1+\epsilon)^k$ at most $\phi(b)$.
  Hence, we have $\tau^* \leq \phi(b) < (1+\epsilon)\tau^*$.
  If $\phi(b) \leq 0$, $(b_{\tau^*},\tau^*) = (B,0)$ and we output zero.

  Finally, we analyze the time complexity of Algorithm~\ref{alg:approximate}.
  Each binary search requires $O(\log B)$ time.
  The number of  binary searches performed is $O(\log_{1+\epsilon}\frac{\Delta}{\delta} ) = O(\frac{1}{\epsilon}\log\frac{\Delta}{\delta})$ when $\phi$ has a positive value and 1 when $\phi$ is non-positive.
  Hence, we have the desired time complexity.
\end{proof}
We can regard Algorithm~\ref{alg:approximate} as providing a value oracle for a function $\tilde{\phi}:\set{0,1,\ldots,B-1}\to \bbR_+$ that is an approximation to the input function $\phi:\set{0,1,\ldots,B-1}\to \bbR$.

We now describe our algorithm for maximizing DR-submodular functions.
The basic idea is similar to Algorithm~\ref{alg:pseudo-poly}, but when we need $f(\bchi_e \mid \bx)$ and $f(-\bchi_e \mid \by)$, we use approximations to them instead.
Let $\alpha$ and $\beta$ be approximations to $f(\bchi_e \mid \bx)$ and $f(-\bchi_e \mid \by)$, respectively, obtained by Algorithm~\ref{alg:approximate}.
Then, we increase $\bx(e)$ by one with probability $\frac{\alpha}{\alpha+\beta}$ and decrease $\by(e)$ by one with the complement probability $\frac{\beta}{\alpha+\beta}$.
The details are given in Algorithm~\ref{alg:polynomial-oracle}.

\begin{algorithm}[t]
  \caption{Algorithm with polynomially many queries}\label{alg:polynomial-oracle}
  \begin{algorithmic}[1]
    \REQUIRE{$f:\bbZ_+^E \to \bbR_+$, $\bB \in \bbZ_+^E$, $\epsilon > 0$.}
    \ENSURE{$f$ is DR-submodular.}
    \STATE $\bx \leftarrow \bzero$, $\by \leftarrow \bB$.
    \FOR{$e \in E$}
      \STATE Define $g,h:\set{0,1,\ldots,B-1} \to \bbR$ as $g(b) = f(\bchi_e \mid \bx+b\bchi_e)$ and $h(b) = f(- \bchi_e \mid \by-b \bchi_e)$.\hspace{-1em} %
      \STATE Let $\tilde{g}$ and $\tilde{h}$ be approximations to $g$ and $h$, respectively, given by Algorithm~\ref{alg:approximate}.
      \WHILE{$\bx(e) < \by(e)$} \label{line:polynomial-oracle-while}
        \STATE $\alpha \leftarrow \tilde{g}(\bx(e))$ and $\beta \leftarrow \tilde{h}(\bB(e) - \by(e))$. \label{line:alpha-beta}
        \STATE{$\bx(e) \leftarrow \bx(e) + 1$ with probability $\frac{\alpha}{\alpha+\beta}$ and $\by(e) \leftarrow \by(e) - 1$ with the complement probability $\frac{\beta}{\alpha+\beta}$.
        If $\alpha = \beta = 0$, we assume $\frac{\alpha}{\alpha+\beta}=1$.}
      \ENDWHILE
    \ENDFOR
    \RETURN $\bx$.
  \end{algorithmic}
\end{algorithm}

We now analyze Algorithm~\ref{alg:polynomial-oracle}.
An \emph{iteration} refers to an iteration in the while loop from Line~\ref{line:polynomial-oracle-while}.
We have $\|\bB\|_1$ iterations in total.
For $k \in \set{1,\ldots,\|\bB\|_1}$, let $\bx_k$ and $\by_k$ be $\bx$ and $\by$, respectively, right after the $k$th iteration.
Note that $\bx_{\|\bB\|_1} = \by_{\|\bB\|_1}$ is the output of the algorithm.
We define $\bx_0 = \bzero$ and $\by_0 = \bB$ for convenience.

Let $\bo$ be an optimal solution.
For $k \in \set{0,1,\ldots,\|\bB\|_1}$, we then define $\bo_{k} = (\bo \vee \bx_{k}) \wedge \by_{k}$.
Note that $\bo_0 = \bo$ holds and $\bo_{\|\bB\|_1}$ equals the output of the algorithm.
We have the following key lemma.
\begin{lemma}\label{lem:key-lemma}
  For every $k \in [\|\bB\|_1]$, we have
  \begin{align}
      &\E[f(\bo_{k-1}) - f(\bo_{k})] \nonumber\\
    &\leq  \frac{1+\epsilon}{2}\E[f(\bx_{k}) - f(\bx_{k-1}) + f(\by_{k}) - f(\by_{k-1}) ] \label{eq:key-inequality}
  \end{align}
\end{lemma}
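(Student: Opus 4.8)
The plan is to mirror the standard double-greedy analysis of Buchbinder et al., but to carry the approximation error $(1+\epsilon)$ through the bookkeeping carefully. Fix an iteration $k$, say it operates on coordinate $e$, with current state $\bx = \bx_{k-1}$, $\by = \by_{k-1}$. Write $a = f(\bchi_e \mid \bx)$ and $b = f(-\bchi_e \mid \by)$ for the \emph{true} marginal gains, and $\alpha = \tilde g(\bx(e))$, $\beta = \tilde h(\bB(e)-\by(e))$ for the sketched values actually used by the algorithm. By Lemma~\ref{lem:approximate}, $\alpha \le a < (1+\epsilon)\alpha$ when $a>0$ (and $\alpha = 0$ when $a \le 0$), and likewise for $\beta$ versus $b$. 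As noted in the text, DR-submodularity gives $a + b \ge 0$, so at least one of $a,b$ is nonnegative; since $\alpha,\beta \ge 0$ always and the algorithm uses $\frac{\alpha}{\alpha+\beta}$, the step is well-defined. First I would dispose of the degenerate case $\alpha = \beta = 0$: then the algorithm increases $\bx(e)$ deterministically, $\bx_k = \bx_{k-1} + \bchi_e$, $\by_k = \by_{k-1}$; one checks $a \le 0$ so $b \ge 0$, and $\bo_{k-1} - \bo_k$ changes only in coordinate $e$ by an amount bounded using $f(\bchi_e \mid \bo \vee \bx) \le f(\bchi_e\mid\bx) = a \le 0$ (DR-submodularity, since $\bx \le \bo\vee\bx$), making the left side $\le 0$ while the right side is $\ge 0$; the inequality holds trivially.

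For the main case $\alpha + \beta > 0$, condition on $\bx_{k-1}, \by_{k-1}$ and take expectation over the single coin flip. With probability $p = \frac{\alpha}{\alpha+\beta}$ we set $\bx_k = \bx + \bchi_e$, $\by_k = \by$, contributing $a$ to $f(\bx_k)-f(\bx_{k-1})$ and $0$ to $f(\by_k)-f(\by_{k-1})$; with probability $1-p = \frac{\beta}{\alpha+\beta}$ we set $\by_k = \by - \bchi_e$, $\bx_k = \bx$, contributing $0$ and $b$ respectively. So the conditional expectation of the right-hand side (without the $\frac{1+\epsilon}{2}$) is
\begin{equation*}
  p\cdot a + (1-p)\cdot b = \frac{\alpha a + \beta b}{\alpha+\beta}.
\end{equation*}
For the left-hand side, $\bo_{k-1}$ and $\bo_k$ differ only in coordinate $e$, and I would bound $\E[f(\bo_{k-1}) - f(\bo_k)]$ in each of the two branches by the usual case analysis on how $\bo(e)$ sits relative to $\bx(e)$ and $\by(e)$. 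Using DR-submodularity to compare marginals at $\bo \vee \bx$ (resp.\ $\bo \wedge \by$) against marginals at $\bx$ (resp.\ $\by$), the standard argument yields a conditional bound of the shape $(1-p)\cdot a + p \cdot b = \frac{\beta a + \alpha b}{\alpha+\beta}$, i.e.\ the "crossed" version — this is exactly the inequality $\E[f(\bo_{k-1})-f(\bo_k)] \le \frac{\beta a + \alpha b}{\alpha + \beta}$ one gets in the exact double-greedy proof, except that here $p$ is built from the \emph{sketched} $\alpha,\beta$ rather than from $a,b$. I would reproduce that case analysis (it splits on the sign pattern of $a,b$, using $a+b\ge 0$).

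It therefore remains to show the pointwise (deterministic, given the conditioning) inequality
\begin{equation*}
  \frac{\beta a + \alpha b}{\alpha+\beta} \;\le\; \frac{1+\epsilon}{2}\cdot\frac{\alpha a + \beta b}{\alpha+\beta},
\end{equation*}
and then take the outer expectation over $\bx_{k-1},\by_{k-1}$ and use linearity to telescope into \eqref{eq:key-inequality}. This last algebraic step is where the approximation quality is spent, and I expect it to be the only real obstacle. The idea: substitute $a \in [\alpha,(1+\epsilon)\alpha]$ and $b \in [\beta,(1+\epsilon)\beta]$ (with the convention that if $a\le 0$ then $\alpha=0$, which kills the corresponding terms). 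Clearing the common denominator, it suffices that $2(\beta a + \alpha b) \le (1+\epsilon)(\alpha a + \beta b)$. Writing $a = \alpha + s$, $b = \beta + t$ with $0 \le s \le \epsilon\alpha$, $0\le t\le \epsilon\beta$, the inequality becomes $2(\beta s + \alpha t) \le (1+\epsilon)(\alpha a + \beta b) - 2\alpha\beta\cdot\!$-ish terms — more cleanly, note $\beta a + \alpha b \le \beta a + \alpha b$ and bound $\beta a \le \beta\cdot(1+\epsilon)\alpha$ only where it helps. The cleanest route is: $\alpha a + \beta b \ge \alpha^2 + \beta^2 \ge \alpha\beta \ge \frac{1}{1+\epsilon}(\beta\alpha + \alpha\beta)\cdot\frac12\cdot 2$ — I would instead directly verify $2(\beta a + \alpha b) \le (1+\epsilon)(\alpha a + \beta b)$ by checking it at the extreme points of the box $[\alpha,(1+\epsilon)\alpha]\times[\beta,(1+\epsilon)\beta]$, since the expression is linear in $a$ and in $b$ separately, so the maximum of LHS$-$RHS over the box is attained at a corner; at each corner it reduces to an inequality among $\alpha,\beta$ that follows from $2\alpha\beta \le \alpha^2+\beta^2$ and $\epsilon \ge 0$. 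Taking expectations over all the randomness before iteration $k$ and summing \eqref{eq:key-inequality} over $k = 1,\ldots,\|\bB\|_1$ then telescopes to give the approximation guarantee (that summation is presumably the content of the next proposition, so I stop at Lemma~\ref{lem:key-lemma}).
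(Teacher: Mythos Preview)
Your plan has a genuine gap at the algebraic step, and the ``crossed'' bound you quote is not what the standard double-greedy argument actually produces.

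First, the case analysis on $\bo_{k-1}(e)$ does \emph{not} give $(1-p)\cdot a + p\cdot b$. It gives one of $p\cdot b$ (if $\bo_{k-1}(e)=\bx_{k-1}(e)$), $(1-p)\cdot a$ (if $\bo_{k-1}(e)=\by_{k-1}(e)$), or $0$ (in between); only one case applies, so the correct upper bound is the \emph{maximum} of these, not their sum. In the exact setting $p=a/(a+b)$ both nonzero options equal $\frac{ab}{a+b}$, so summing them is already off by a factor of two; and when one of $a,b$ is negative (say $\alpha=0$, $a<0$, $\bo_{k-1}(e)>\bx_{k-1}(e)$), your sum equals $a<0$ while the true loss is $0$, so the claimed inequality is false outright.

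Second, even in the benign regime $\alpha,\beta>0$ and $a\in[\alpha,(1+\epsilon)\alpha]$, $b\in[\beta,(1+\epsilon)\beta]$, the inequality $2(\beta a+\alpha b)\le(1+\epsilon)(\alpha a+\beta b)$ fails at a corner: take $\alpha=\beta=1$, $a=1+\epsilon$, $b=1$. Then the left side is $2(2+\epsilon)$ and the right side is $(1+\epsilon)(2+\epsilon)$, so the inequality demands $2\le 1+\epsilon$, which is false for every $\epsilon<1$. Linearity in $a,b$ does push the extremum to a corner, but that corner breaks the inequality rather than confirming it.

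The fix the paper uses is to stop carrying the true marginals $a,b$ through both sides and instead bound each side purely in terms of the sketched $\alpha,\beta$. On the gain side, use only $a\ge\alpha$ and $b\ge\beta$ to get
\[
\E[f(\bx_k)-f(\bx_{k-1})+f(\by_k)-f(\by_{k-1})]\;\ge\;\frac{\alpha^2+\beta^2}{\alpha+\beta}.
\]
On the loss side, use the (correct) single-case bound together with $a<(1+\epsilon)\alpha$, $b<(1+\epsilon)\beta$: in the case $\bo_{k-1}(e)=\bx_{k-1}(e)$ one gets $\le p\cdot b\le \frac{\alpha}{\alpha+\beta}(1+\epsilon)\beta$, and symmetrically in the other case, so in every case
\[
\E[f(\bo_{k-1})-f(\bo_k)]\;\le\;\frac{(1+\epsilon)\alpha\beta}{\alpha+\beta}.
\]
Then $2\alpha\beta\le\alpha^2+\beta^2$ finishes. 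The point is that the two uses of the approximation guarantee go in opposite directions (lower bound on the gain, upper bound on the loss), and decoupling them through $\alpha,\beta$ avoids the cross terms that sink your inequality.
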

\begin{proof}
  Fix $k \in [\|\bB\|_1]$ and let $e$ be the element of interest in the $k$th iteration.
  Let $\alpha$ and $\beta$ be the values in Line~\ref{line:alpha-beta} in the $k$th iteration.
  We then have
  \begin{align}
    & \E[f(\bx_{k}) - f(\bx_{k-1}) + f(\by_{k}) - f(\by_{k-1})] \nonumber \\
    &=
    \frac{\alpha}{\alpha+\beta}f(\bchi_{e} \mid \bx_{k-1})
    +
    \frac{\beta}{\alpha+\beta}f(-\bchi_{e} \mid \by_{k-1}) \nonumber \\
    &\geq
    \frac{\alpha}{\alpha+\beta}\alpha
    +
    \frac{\beta}{\alpha+\beta}\beta
    = \frac{\alpha^2+\beta^2}{\alpha+\beta}, \label{eq:marginal-of-alg}
    \end{align}
    where we use the guarantee in Lemma~\ref{lem:approximate} in the inequality.

    We next establish an upper bound of $\E[f(\bo_{k-1}) - f(\bo_{k})]$.
    As $\bo_k = (\bo \vee \bx_k) \wedge \by_k$, conditioned on a fixed $\bo_{k-1}$, we obtain
    \begin{align}
        &\E[f(\bo_{k-1}) - f(\bo_{k})] \nonumber \\
      &=
      \frac{\alpha}{\alpha+\beta} \Bigl(f(\bo_{k-1})-f(\bo_{k-1} \vee \bx_{k}(e) \bchi_e)\Bigr) \nonumber \\
      & \qquad +
      \frac{\beta}{\alpha+\beta} \Bigl(f(\bo_{k-1})-f(\bo_{k-1} \wedge \by_{k}(e) \bchi_e)\Bigr). \label{eq:marginal-of-opt}
    \end{align}
    \begin{claim}\label{cla:key-lemma}
      $\eqref{eq:marginal-of-opt} \leq \frac{(1+\epsilon)\alpha\beta}{\alpha+\beta}$.
    \end{claim}
    \begin{proof}
      We show this claim by considering the following three cases.

      If $\bx_{k}(e) \leq \bo_{k-1}(e) \leq \by_{k}(e)$,
      then~\eqref{eq:marginal-of-opt} is zero.

      If $\bo_{k-1}(e) < \bx_{k}(e)$, then $\bo_{k}(e) = \bo_{k-1}(e) + 1$, and the first term of~\eqref{eq:marginal-of-opt} is
      \begin{align*}
        & f(\bo_{k-1})-f(\bo_{k-1} \vee \bx_{k}(e) \bchi_e) \\
        & = f(\bo_{k-1}) - f(\bo_{k-1}+\bchi_e)\\
        & \leq f(\by_{k-1} - \bchi_e) - f(\by_{k-1}) \\
        & = f(-\bchi_e \mid \by_{k-1}) \\
        & \leq (1+\epsilon)\beta.
      \end{align*}
      Here, the first inequality uses the DR-submodularity of $f$ and the fact that $\bo_{k-1} \leq \by_{k-1}-\bchi_e$,
      and the second inequality uses the guarantee in Lemma~\ref{lem:approximate}.
      The second term of~\eqref{eq:marginal-of-opt} is zero, and hence we have $\eqref{eq:marginal-of-opt} \leq \frac{(1+\epsilon)\alpha\beta}{\alpha+\beta}$.

      If $\by_{k}(e) < \bo_{k-1}(e)$, then by a similar argument, we have $\eqref{eq:marginal-of-opt} \leq \frac{(1+\epsilon)\alpha\beta}{\alpha+\beta}$.
    \end{proof}
    We now return to proving Lemma~\ref{lem:key-lemma}.
    By Claim~\ref{cla:key-lemma}, 
    \[
      \eqref{eq:marginal-of-opt}
      \leq
      \frac{(1+\epsilon)\alpha\beta}{\alpha+\beta} \leq  \frac{1+\epsilon}{2}\Bigl( \frac{\alpha^2+\beta^2}{\alpha+\beta} \Bigr)
      \leq
       \frac{1+\epsilon}{2}\cdot \eqref{eq:marginal-of-alg},
    \]
    which indicates the desired result.
\end{proof}

\begin{theorem}\label{the:polynomial-oracle}
  Algorithm~\ref{alg:polynomial-oracle} is a $\frac{1}{2+\epsilon}$-approximation algorithm for~\eqref{eq:max-DRsfm} with time complexity $O(\frac{|E|}{\epsilon} \cdot \log (\frac{\Delta}{\delta}) \log \|\bB\|_\infty \cdot \theta + \|\bB\|_1\log \|\bB\|_\infty)$,
  where $\delta$ and $\Delta$ are the minimum positive marginal gain and the maximum positive value, respectively, of $f$ and $\theta$ is the running time of evaluating $f$.
\end{theorem}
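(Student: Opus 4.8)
The plan is to combine the per-iteration Lemma~\ref{lem:key-lemma} with a standard telescoping argument to get the approximation ratio, and then to assemble the running-time bound from Lemma~\ref{lem:approximate} applied to $g$ and $h$ in each coordinate.

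For the approximation guarantee, I would sum inequality~\eqref{eq:key-inequality} over all $k \in [\|\bB\|_1]$. On the left-hand side the sum telescopes to $\E[f(\bo_0) - f(\bo_{\|\bB\|_1})]$. Since $\bo_0 = \bo$ is the optimal solution and $\bo_{\|\bB\|_1}$ equals the output $\bx_{\|\bB\|_1}$ of the algorithm (as noted in the excerpt, because $\bx_{\|\bB\|_1} = \by_{\|\bB\|_1}$ forces $\bo_{\|\bB\|_1} = (\bo \vee \bx_{\|\bB\|_1}) \wedge \by_{\|\bB\|_1} = \bx_{\|\bB\|_1}$), the left-hand side equals $f(\bo) - \E[f(\text{output})]$. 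On the right-hand side the two sums $\sum_k (f(\bx_k) - f(\bx_{k-1}))$ and $\sum_k (f(\by_k) - f(\by_{k-1}))$ telescope to $f(\bx_{\|\bB\|_1}) - f(\bx_0)$ and $f(\by_{\|\bB\|_1}) - f(\by_0)$. Using $\bx_0 = \bzero$, $\by_0 = \bB$, $f \geq 0$, and $\bx_{\|\bB\|_1} = \by_{\|\bB\|_1} = \text{output}$, the right-hand side is at most $\frac{1+\epsilon}{2}\bigl(2\E[f(\text{output})] - f(\bzero) - f(\bB)\bigr) \leq \frac{1+\epsilon}{2} \cdot 2\E[f(\text{output})] = (1+\epsilon)\E[f(\text{output})]$. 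Combining, $f(\bo) - \E[f(\text{output})] \leq (1+\epsilon)\E[f(\text{output})]$, hence $\E[f(\text{output})] \geq \frac{1}{2+\epsilon} f(\bo)$, which is the claimed ratio. I should double-check that the conditioning in Lemma~\ref{lem:key-lemma} is handled correctly — the lemma is stated with an outer expectation over $\bo_{k-1}$ already, so the telescoping is valid over the full randomness.

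For the running time, in each of the $|E|$ iterations of the outer loop we invoke Algorithm~\ref{alg:approximate} twice, once for $g$ and once for $h$. By Lemma~\ref{lem:approximate}, each such call costs $O(\frac{1}{\epsilon}\log(\frac{\Delta}{\delta})\log B \cdot \theta)$ when the function has a positive value and $O(\log B \cdot \theta)$ otherwise, where here $\delta$ is a lower bound on the minimum positive marginal gain of $f$ and $\Delta$ an upper bound on its maximum positive value; note $B = \|\bB\|_\infty$. Summed over all $|E|$ coordinates this gives $O(\frac{|E|}{\epsilon}\log(\frac{\Delta}{\delta})\log\|\bB\|_\infty \cdot \theta)$. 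Separately, the inner while loop runs $\|\bB\|_1$ iterations in total across the whole algorithm, and each iteration performs two reconstruction queries to the sketches $\tilde g, \tilde h$, each costing $O(\log B)$ time by Lemma~\ref{lem:approximate}; this contributes $O(\|\bB\|_1 \log\|\bB\|_\infty)$. Adding the two parts yields the stated bound $O(\frac{|E|}{\epsilon}\log(\frac{\Delta}{\delta})\log\|\bB\|_\infty \cdot \theta + \|\bB\|_1\log\|\bB\|_\infty)$.

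The main obstacle I anticipate is not the telescoping itself but making sure the right-hand side of the summed inequality is genuinely bounded by $(1+\epsilon)\E[f(\text{output})]$ — this relies crucially on non-negativity of $f$ to discard the $-f(\bzero) - f(\bB)$ terms, exactly as in the original double-greedy analysis of Buchbinder et al. A secondary subtlety is confirming that the approximation parameters $\delta, \Delta$ quoted in the theorem (defined globally in terms of $f$'s marginal gains and values) correctly dominate the coordinate-wise quantities appearing in each call of Algorithm~\ref{alg:approximate} to $g$ and $h$; since $g$ and $h$ are themselves marginal-gain functions of $f$, their positive values lie between the global $\delta$ and $\Delta$, so this is fine. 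Everything else is routine bookkeeping.
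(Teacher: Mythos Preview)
Your proposal is correct and follows essentially the same approach as the paper: telescope Lemma~\ref{lem:key-lemma} over all $k$, use non-negativity of $f$ to drop $f(\bzero)+f(\bB)$, and collapse via $\bx_{\|\bB\|_1}=\by_{\|\bB\|_1}$ to get $\E[f(\text{output})]\geq \frac{1}{2+\epsilon}f(\bo)$; the running-time argument likewise matches the paper's accounting of oracle calls via Lemma~\ref{lem:approximate} plus the $\|\bB\|_1$ inner iterations with $O(\log\|\bB\|_\infty)$ reconstruction cost each. The subtleties you flag (conditioning in the lemma, and that the global $\delta,\Delta$ bound the per-coordinate extremes of $g$ and $h$) are exactly the ones the paper addresses.
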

\begin{proof}
  Summing up~\eqref{eq:key-inequality} for $k \in [\|\bB\|_1]$, we get
  \begin{align*}
    & \sum_{k = 1}^{\|\bB\|_1} \E[f(\bo_{k-1}) - f(\bo_{k})] \\
    &\leq
     \frac{1+\epsilon}{2}\sum_{k = 1}^{\|\bB\|_1}\E[f(\bx_{k}) - f(\bx_{k-1}) + f(\by_{k}) - f(\by_{k-1}) ].
  \end{align*}
  The above sum is telescopic, and hence we obtain
  \begin{align*}
    & \E[f(\bo_0) - f(\bo_{\|\bB\|_1})] \\
    &\leq
    \frac{1+\epsilon}{2}\E[ f(\bx_{\|\bB\|_1}) - f(\bx_0) + f(\by_{\|\bB\|_1}) - f(\by_0)] \\
    &\leq
    \frac{1+\epsilon}{2} \E[f(\bx_{\|\bB\|_1})+f(\by_{\|\bB\|_1})] \\
    & = (1+\epsilon) \E[f(\bx_{\|\bB\|_1})].
  \end{align*}
  The second inequality uses the fact that $f$ is non-negative, and the last equality uses $\by_{\|\bB\|_1} = \bx_{\|\bB\|_1}$.
  Because $\E[f(\bo_0) - f(\bo_{\|\bB\|_1})] = f(\bo) - \E[f(\bx_{\|\bB\|_1})]$,
  we obtain that $\E[f(\bx_{\|\bB\|_1})] \geq \frac{1}{2+\epsilon}f(\bo)$.

  We now analyze the time complexity.
  We only query the input function $f$ inside of Algorithm~\ref{alg:approximate}, and the number of oracle calls is $O(\frac{|E|}{\epsilon} \log (\frac{\Delta}{\delta}) \log B)$ by Lemma~\ref{lem:approximate}.
  Note that we invoke Algorithm~\ref{alg:approximate} with $g$ and $h$, and the minimum positive values of $g$ and $h$ are at least the minimum positive marginal gain $\delta$ of $f$.
  The number of iterations is $\|\bB\|_1$, and we need $O(\log B)$ time to access $\tilde{g}$ and $\tilde{h}$.
  Hence, the total time complexity is as stated.
\end{proof}

\begin{remark}\label{rem:non-negative}
  We note that even if $f$ is not a non-negative function, the proof of Theorem~\ref{the:polynomial-oracle} works as long as $f(\bx_0) \geq 0$ and $f(\by_0) \geq 0$, that is, $f(\bzero) \geq 0$ and $f(\bB) \geq 0$.
  Hence, given a DR-submodular function $f:\bbZ_+^E \to \bbR$ and $\bB \in \bbZ_+^E$, we can obtain a $\frac{1}{2+\epsilon}$-approximation algorithm for the following problem:
  \begin{align}
    \begin{array}{ll}
      \text{maximize} & f(\bx) - \min\set{f(\bzero),f(\bB)} \\
      \text{subject to} & \bzero \leq \bx \leq \bB,
    \end{array}\label{eq:max-DRsfm-g}
  \end{align}
  This observation is useful, as the objective function often takes negative values in real-world applications.
\end{remark}

\subsection{Polynomial-time algorithm}\label{subsec:poly}

In many applications, the running time needed to evaluate the input function is a bottleneck, and hence Algorithm~\ref{alg:polynomial-oracle} is already satisfactory.
However, it is theoretically interesting to reduce the total running time to a polynomial, and we show the following.
The proof is deferred to Appendix~\ref{sec:proof-truely-polynomial}.
\begin{theorem}\label{the:truely-polynomial}
  There exists
  a $\frac{1}{2+2\epsilon}$-approximation algorithm with time complexity $\widetilde{O}(\frac{|E|}{\epsilon}  \log (\frac{\Delta}{\delta})\log \|\bB\|_\infty \cdot (\theta + \log \|\bB\|_\infty))$,
  where $\delta$ and $\Delta$ are the minimum positive marginal gain and the maximum positive value, respectively of $f$ and $\theta$ is the running time of evaluating $f$.
  Here $\widetilde{O}(T)$ means $O(T \log^c T)$ for some $c \in \bbN$.
\end{theorem}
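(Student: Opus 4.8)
\textbf{Proof proposal for Theorem~\ref{the:truely-polynomial}.}
The plan is to keep the approximation analysis of Algorithm~\ref{alg:polynomial-oracle} essentially intact and attack only its one pseudopolynomial ingredient: the inner \textbf{while} loop of Line~\ref{line:polynomial-oracle-while}, which performs $\bB(e)$ one-step updates per coordinate $e$ and hence $\norm{\bB}_1$ updates in total. The structural fact I would exploit is that, once the sketches $\tilde g$ and $\tilde h$ produced by Algorithm~\ref{alg:approximate} are fixed, both are non-increasing \emph{step functions} taking only $O(\frac1\epsilon\log\frac{\Delta}{\delta})$ distinct values, their breakpoints being exactly the recorded $b_\tau$'s. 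Thus, while the algorithm processes a single coordinate $e$, the pair $(\alpha,\beta)=(\tilde g(\bx(e)),\tilde h(\bB(e)-\by(e)))$ is piecewise constant, and it changes only when $\bx(e)$ reaches the next breakpoint of $\tilde g$ or $\bB(e)-\by(e)$ reaches the next breakpoint of $\tilde h$. Since $\bx(e)$ and $\bB(e)-\by(e)$ are non-decreasing over the life of coordinate $e$, each breakpoint is crossed at most once, so the run for coordinate $e$ decomposes into $O(\frac1\epsilon\log\frac{\Delta}{\delta})$ maximal \emph{segments} on which $(\alpha,\beta)$ is constant, and $O(\frac{\abs{E}}{\epsilon}\log\frac{\Delta}{\delta})$ segments over all of $E$.

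On a single segment with fixed $(\alpha,\beta)$, every iteration of the while loop is the \emph{same} independent trial: increase $\bx(e)$ with probability $p=\frac{\alpha}{\alpha+\beta}$, decrease $\by(e)$ otherwise (the degenerate cases $p\in\set{0,1}$ are handled trivially, by advancing deterministically to the nearest of the breakpoint or the meeting point $\bx(e)=\by(e)$). Hence I would replace the one-at-a-time simulation of a segment by a single draw: if the current gap is $D=\by(e)-\bx(e)$, the next breakpoint of $\tilde g$ is $a$ increments away, and the next breakpoint of $\tilde h$ is $b$ decrements away, then the segment runs i.i.d.\ Bernoulli$(p)$ trials until the number of increments hits $a$, the number of decrements hits $b$, or their sum hits $D$; I want to sample in one shot which of these three events fires together with the final number of increments and decrements. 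This stopped-walk outcome has an explicit law — a truncated negative-binomial / hypergeometric mixture — and the technical heart of the appendix is to sample it in $\widetilde{O}(1)$ time, e.g.\ via a fast exact sampler for Binomial and negative-binomial variates, or by first rounding $p$ to a value with a polynomially bounded denominator. Rounding $p$ (or using an approximate sampler with inverse-polynomial error) perturbs the induced distribution on $\bx$ slightly, and I expect this is precisely where the guarantee degrades from $\frac{1}{2+\epsilon}$ to $\frac{1}{2+2\epsilon}$: I would absorb the loss either by coupling the batched process to an execution of Algorithm~\ref{alg:polynomial-oracle} (run with parameter $\epsilon$) so that the two differ only with negligible probability against $f(\bo)$, or by re-running the proof of Lemma~\ref{lem:key-lemma} with $\frac{1+2\epsilon}{2}$ in place of $\frac{1+\epsilon}{2}$.

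With the batching in place the accounting is routine. Building $\tilde g$ and $\tilde h$ for all coordinates costs $O(\frac{\abs{E}}{\epsilon}\log\frac{\Delta}{\delta}\log\norm{\bB}_\infty\cdot\theta)$ by Lemma~\ref{lem:approximate} (the minimum positive values of $g$ and $h$ are at least the minimum positive marginal gain $\delta$ of $f$), and this is the only place $f$ is queried. There are $O(\frac{\abs{E}}{\epsilon}\log\frac{\Delta}{\delta})$ segments; each costs $O(\log\norm{\bB}_\infty)$ to locate the relevant breakpoints in the sketches and read off $\alpha,\beta$, plus $\widetilde{O}(1)$ for the stopped-walk sample. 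Summing gives the stated bound $\widetilde{O}\!\left(\frac{\abs{E}}{\epsilon}\log\frac{\Delta}{\delta}\log\norm{\bB}_\infty\cdot(\theta+\log\norm{\bB}_\infty)\right)$, and the approximation ratio follows by combining the coupling argument with Theorem~\ref{the:polynomial-oracle} (and Remark~\ref{rem:non-negative} if one wants the shifted objective).

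The step I expect to be the main obstacle is the $\widetilde{O}(1)$-time sampler for the stopped Bernoulli walk on a segment, together with the bookkeeping needed to show that replacing it by whatever is actually implementable — a discretized $p$ or an approximate sampler — changes the output distribution by a controlled amount; everything else is either structural (step functions, monotone breakpoint crossings) or a mechanical re-run of the existing analysis.
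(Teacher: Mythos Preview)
Your proposal is correct and matches the paper's strategy: the same piecewise-constant observation on $\tilde g,\tilde h$, the same decomposition of each coordinate's run into $O(\frac{1}{\epsilon}\log\frac{\Delta}{\delta})$ segments on which $(\alpha,\beta)$ is constant, and the same reduction to batch-sampling the stopped Bernoulli walk on each segment. The one point where your guess diverges is the mechanism behind the extra $\epsilon$: the paper's segment sampler (Algorithm~\ref{alg:quick-simulation}) is an \emph{exact} simulator of the stopped walk---it alternates binomial draws that roughly halve the remaining distance with geometric draws once within $N$ steps of a boundary---but it may \emph{fail} (overshoot) with probability at most $\eta$ per call, in which case the whole algorithm simply outputs $\bzero$; the degradation from $\frac{1}{2+\epsilon}$ to $\frac{1}{2+2\epsilon}$ comes entirely from union-bounding this total failure probability to at most $\frac{\epsilon}{2+2\epsilon}$, not from rounding $p$ or perturbing the output distribution, which conditioned on success is identical to that of Algorithm~\ref{alg:polynomial-oracle}.
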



\section{Experiments}\label{sec:experiments}
In this section, we show our experimental results and the superiority of our algorithm with respect to other baseline algorithms.

\subsection{Experimental setting}

We conducted experiments on a Linux server with an Intel Xeon E5-2690 (2.90 GHz) processor and 256 GB of main memory.
All the algorithms were implemented in C\# and were run using Mono 4.2.3.

We compared the following four algorithms:
\begin{itemize}
\itemsep=0pt
\item Single Greedy (\textsf{SG}): We start with $\bx = \bzero$. For each element $e \in E$, as long as the marginal gain of adding $\bchi_e$ to the current solution $\bx$ is positive, we add it to $\bx$.
The reason that we do not choose the element with the maximum marginal gain is to reduce the number of oracle calls, and our preliminary experiments showed that such a tweak does not improve the solution quality.
\item Double Greedy (\textsf{DG}, Algorithm~\ref{alg:pseudo-poly}).
\item Lattice Double Greedy (\textsf{Lattice-DG}): The $1/3$-approximation algorithm for maximizing non-monotone lattice submodular functions~\cite{Gottschalk:2015fq}.
\item Double Greedy with a polynomial number of oracle calls with error parameter $\epsilon>0$ (\textsf{Fast-DG}$_{\epsilon}$, Algorithm~\ref{alg:polynomial-oracle}).
\end{itemize}
We measure the efficiency of an algorithm by the number of oracle calls instead of the total running time.
Indeed, the running time for evaluating the input function is the dominant factor of the total running, because objective functions in typical machine learning tasks contain sums over all data points, which is time consuming.
Therefore, we do not consider the polynomial-time algorithm (Theorem~\ref{the:truely-polynomial}) here.

\subsection{Revenue maximization}

\begin{table}[t!]
  \centering
  \caption{Objective values (Our methods are highlighted in gray.)}\label{tab:ov}
  \scalebox{0.75}{
  \begin{tabular}{|l||rrrrr|}

\hline
& \multicolumn{5}{c|}{Adolescent health}\\
& $B=10^2$ & $10^3$ & $10^4$ & $10^5$ & $10^6$\\
\hline
\textsf{SG} & 280.55 & 2452.16 & 7093.73 & 7331.42 & 7331.50 \\
\textsf{DG} & 280.55 & 2452.16 & 7124.90 & 7332.96 & 7331.50 \\
\textsf{Lattice-DG} & 215.39 & 1699.66 & 6808.97 & 6709.11 & 5734.30 \\
\rowcolor{Gray}
\textsf{Fast-DG}$_{0.5}$ & 280.55 & 2452.16 & 7101.14 & 7331.36 & 7331.48 \\
\rowcolor{Gray}
\textsf{Fast-DG}$_{0.05}$ & 280.55 & 2452.16 & 7100.86 & 7331.36 & 7331.48 \\
\rowcolor{Gray}
\textsf{Fast-DG}$_{0.005}$ & 280.55 & 2452.16 & 7100.83 & 7331.36 & 7331.48 \\
\hline
\hline
& \multicolumn{5}{c|}{Advogato}\\
& $B=10^2$ & $10^3$ & $10^4$ & $10^5$ & $10^6$\\
\hline
\textsf{SG} & 993.15 & 8680.87 & 25516.05 & 27325.78 & 27326.01 \\
\textsf{DG} & 993.15 & 8680.87 & 25330.91 & 27329.39 & 27326.01 \\
\textsf{Lattice-DG} & 753.93 & 6123.39 & 24289.09 & 24878.94 & 21674.35 \\
\rowcolor{Gray}
\textsf{Fast-DG}$_{0.5}$ & 993.15 & 8680.87 & 25520.83 & 27325.75 & 27325.98 \\
\rowcolor{Gray}
\textsf{Fast-DG}$_{0.05}$ & 993.15 & 8680.87 & 25520.52 & 27325.75 & 27325.98 \\
\rowcolor{Gray}
\textsf{Fast-DG}$_{0.005}$ & 993.15 & 8680.87 & 25520.47 & 27325.75 & 27325.98 \\
\hline
\hline
& \multicolumn{5}{c|}{Twitter lists}\\
& $B=10^2$ & $10^3$ & $10^4$ & $10^5$ & $10^6$\\
\hline
\textsf{SG} & 882.43 & 7713.07 & 22452.61 & 25743.26 & 25744.02 \\
\textsf{DG} & 882.43 & 7713.07 & 22455.97 & 25751.42 & 25744.02 \\
\textsf{Lattice-DG} & 675.67 & 5263.87 & 20918.89 & 20847.48 & 15001.19 \\
\rowcolor{Gray}
\textsf{Fast-DG}$_{0.5}$ & 882.43 & 7713.07 & 22664.65 & 25743.06 & 25743.88 \\
\rowcolor{Gray}
\textsf{Fast-DG}$_{0.05}$ & 882.43 & 7713.07 & 22658.58 & 25743.06 & 25743.88 \\
\rowcolor{Gray}
\textsf{Fast-DG}$_{0.005}$ & 882.43 & 7713.07 & 22658.07 & 25743.06 & 25743.88 \\
\hline

  \end{tabular}
  }
\end{table}

\begin{figure*}[!t]
  \centering
  \begin{minipage}{.325\hsize}
    \centering
    \subfloat[Adolescent health]{
      \includegraphics[width=1.1\hsize]{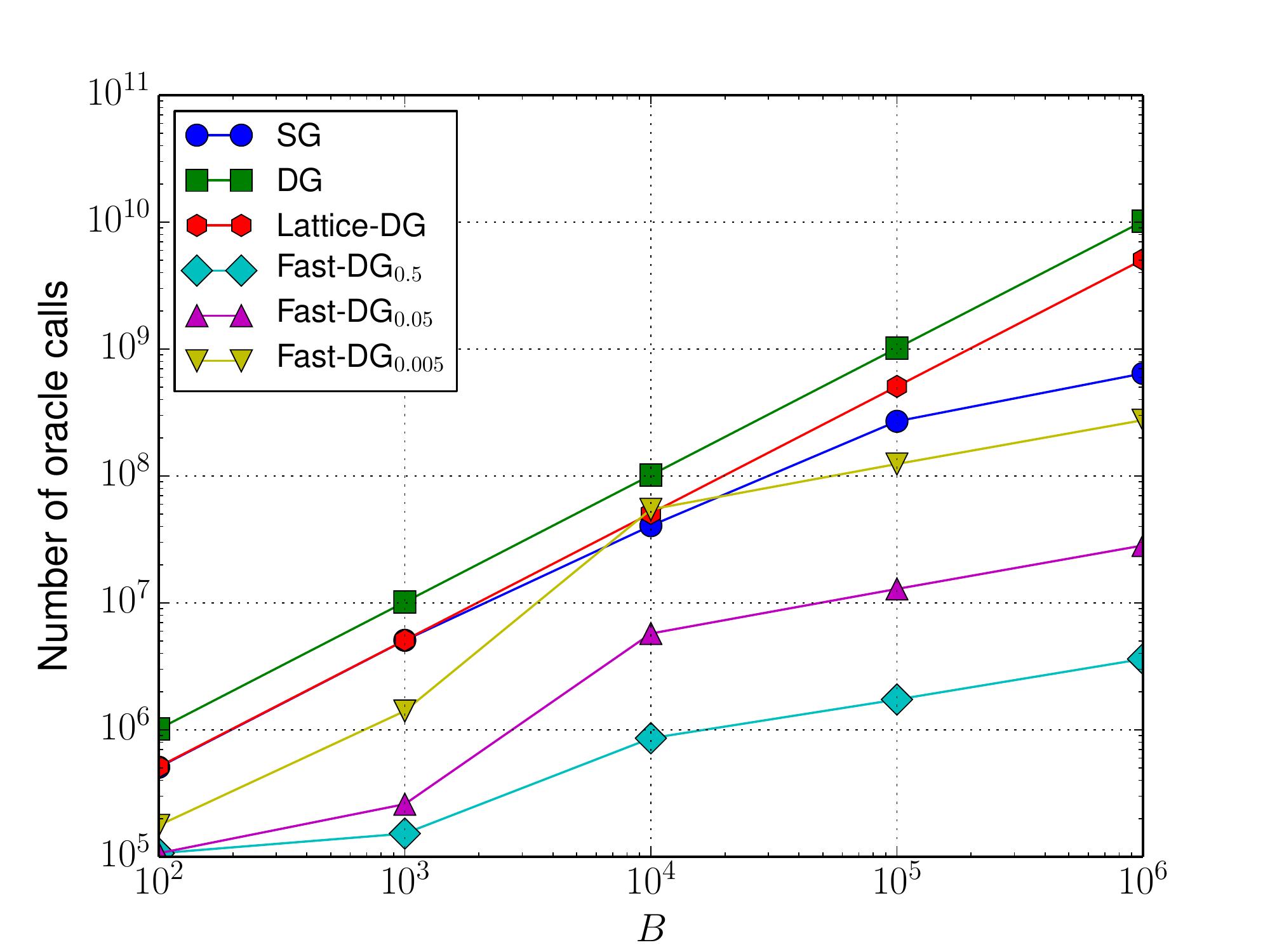}
    }
  \end{minipage}
  \begin{minipage}{.325\hsize}
    \centering
    \subfloat[Advogato]{
    \includegraphics[width=1.1\hsize]{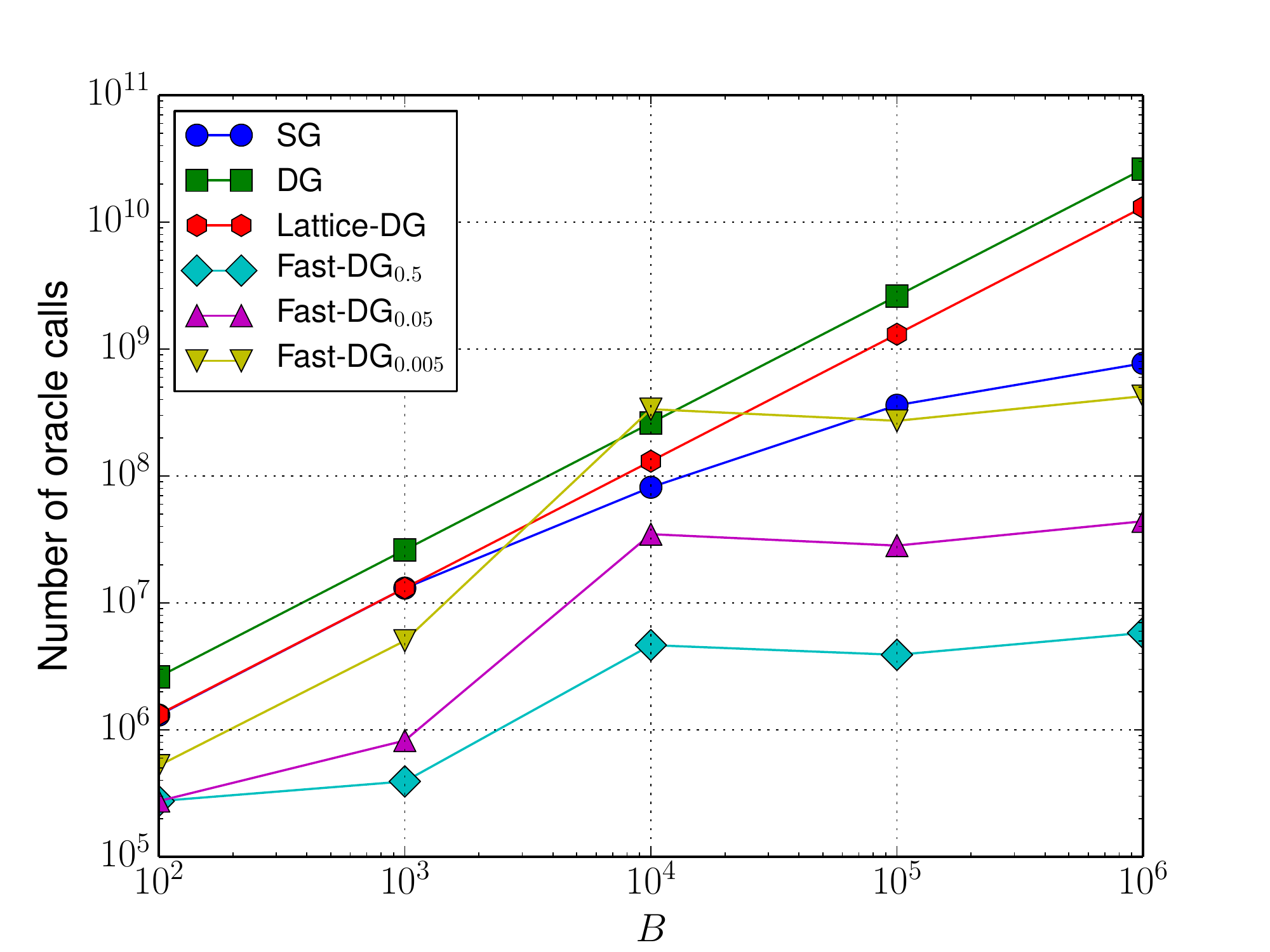}
    }
  \end{minipage}
  \begin{minipage}{.325\hsize}
    \centering
    \subfloat[Twitter lists]{
      \includegraphics[width=1.1\hsize]{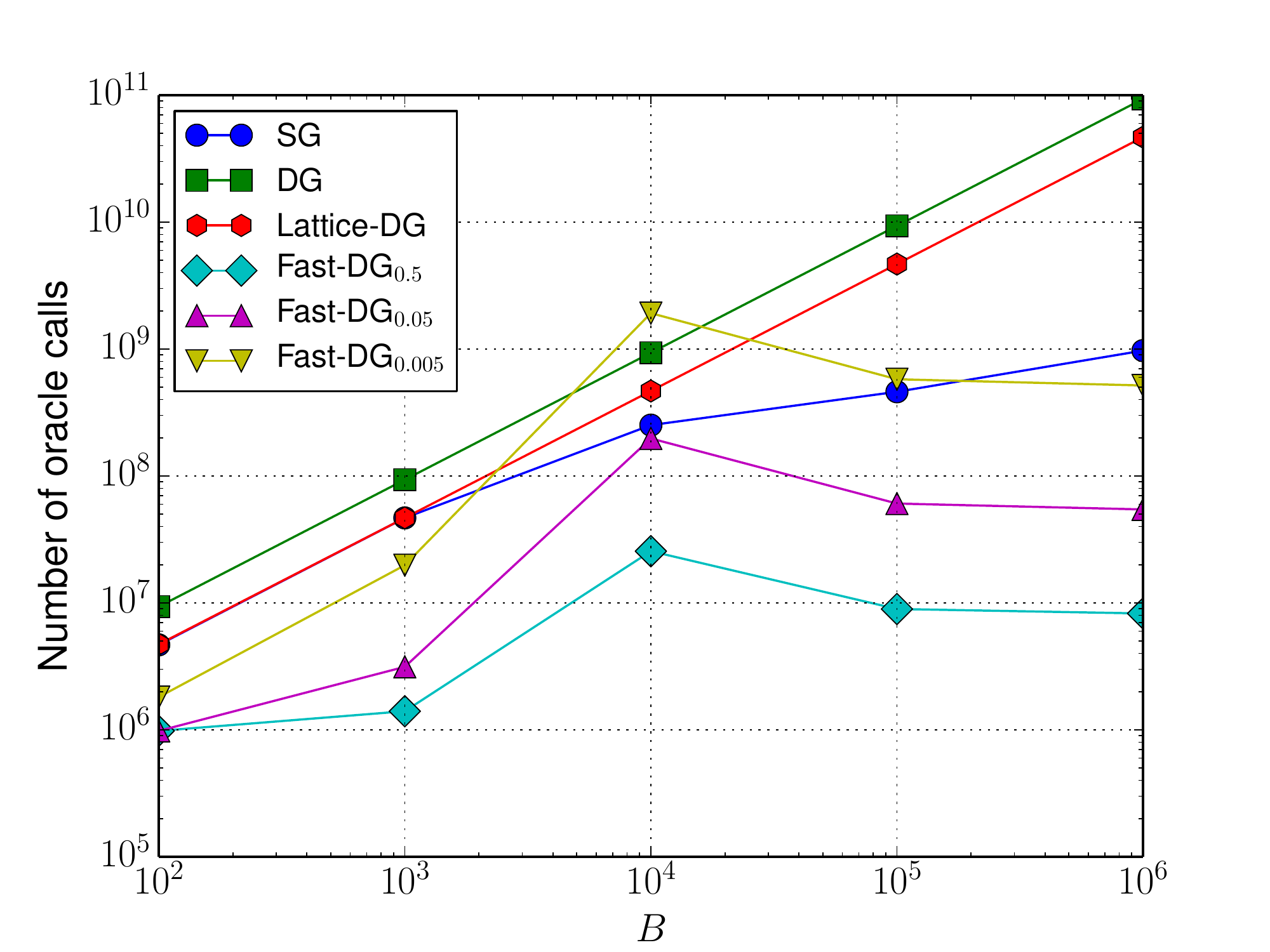}
    }
  \end{minipage}
  \caption{Number of oracle calls}\label{fig:oc-ml}
\end{figure*}

In this application, we consider revenue maximization on an (undirected) social network $G = (V, W)$, where $W = (w_{ij})_{i,j\in V}$ represents the weights of edges.
The goal is to offer for free or advertise a product to users so that the revenue increases through their word-of-mouth effect  on others.
If we invest $x$ units of cost on a user $i \in V$, the user becomes an advocate of the product (independently from other users) with probability $1-(1-p)^x$, where $p \in (0,1)$ is a parameter.
This means that, for investing a unit cost to $i$, we have an extra chance that the user $i$ becomes an advocate with probability $p$.
Let $S \subseteq V$ be a set of users who advocate the product.
Note that $S$ is a random set.
Following a simplified version of the model introduced by~\cite{Hartline:2008fw}, the revenue is defined as $\sum_{i \in S} \sum_{j \in V \setminus S}  w_{ij}$.
Let $f:\bbZ_+^E \to \bbR$ be the expected revenue obtained in this model, that is,
\begin{align*}
  f(\bx) & = \E_{S} \Bigl[\sum_{i \in S} \sum_{j \in V \setminus S}  w_{ij}\Bigr] \\
  & = \sum_{i \in S}  \sum_{j \in V \setminus S} w_{ij}(1-(1-p)^{\bx(i)})(1-p)^{\bx(j)}.
\end{align*}
It is not hard to show that $f$ is non-monotone DR-submodular function (see Appendix~\ref{sec:proof-of-DRsubmod} for the proof).

In our experiment, we used three networks,
Adolescent health (2,539 vertices and 12,969 edges),
Advogato (6,541 vertices and 61,127 edges), and
Twitter lists (23,370 vertices and 33,101 edges), all taken from~\cite{KONECT}.
We regard all the networks as undirected.
We set $p = 0.0001$, and set $w_{ij}=1$ when an edge exists between $i$ and $j$ and $w_{ij}=0$ otherwise.
We imposed the constraint $0 \leq \bx(e) \leq B$ for every $e \in E$, where $B$ is chosen from $\set{10^2,\ldots,10^6}$.

Table~\ref{tab:ov} shows the objective values obtained by each method.
As can be seen, except for \textsf{Lattice-DG}, which is clearly the worst, the choice of a method does not much affect the obtained objective value for all the networks.
Notably, even when $\epsilon$ is as large as $0.5$, the objective values obtained by \textsf{Fast-DG} are almost the same as \textsf{SG} and \textsf{DG}.

Figure~\ref{fig:oc-ml} illustrates the number of oracle calls of each method.
The number of oracle calls of \textsf{DG} and \textsf{Lattice-DG} is linear in $B$, whereas that of \textsf{Fast-DG} slowly grows.
Although the number of oracle calls of \textsf{SG} also slowly grows, it is always orders of magnitude larger than that of \textsf{Fast-DG} with $\epsilon=0.5$ or $\epsilon=0.05$.

In summary, \textsf{Fast-DG}$_{0.5}$ achieves almost the best objective value, whereas the number of oracle calls is two or three orders of magnitude smaller than those of the other methods when $B$ is large.



\section{Conclusions}\label{sec:conclusions}
In this paper, we proposed a polynomial-time $\frac{1}{2+\epsilon}$-approximation algorithm for non-monotone DR-submodular function maximization.
Our experimental results on the revenu maximization problem showed the superiority of our method against other baseline algorithms.

Maximizing a submodular set function under constraints is well studied~\cite{Lee:2009tc,Gupta:2010wj,Chekuri:2014ed,Mirzasoleiman:2016vp}.
An intriguing open question is whether we can obtain polynomial-time algorithms for maximizing DR-submodular functions under constraints such as cardinality constraints, polymatroid constraints, and knapsack constraints.

\subsubsection*{Acknowledgments}
T.~S.~is supported by JSPS Grant-in-Aid for Research Activity Start-up.
Y.~Y.~is supported by JSPS Grant-in-Aid for Young Scientists (B) (No.~26730009), MEXT Grant-in-Aid for Scientific Research on Innovative Areas (No.~24106003), and JST, ERATO, Kawarabayashi Large Graph Project.

\bibliographystyle{aaai}
\bibliography{main}

\begin{thebibliography}{}

\bibitem[\protect\citeauthoryear{Alon, Gamzu, and Tennenholtz}{2012}]{Alon2012}
Alon, N.; Gamzu, I.; and Tennenholtz, M.
\newblock 2012.
\newblock Optimizing budget allocation among channels and influencers.
\newblock In {\em WWW},  381--388.

\bibitem[\protect\citeauthoryear{Badanidiyuru and
  Vondr{\'{a}}k}{2014}]{Badanidiyuru2013}
Badanidiyuru, A., and Vondr{\'{a}}k, J.
\newblock 2014.
\newblock Fast algorithms for maximizing submodular functions.
\newblock In {\em SODA},  1497--1514.

\bibitem[\protect\citeauthoryear{Bian \bgroup et al\mbox.\egroup
  }{2016}]{DBLP:journals/corr/BianMB016}
Bian, Y.; Mirzasoleiman, B.; Buhmann, J.~M.; and Krause, A.
\newblock 2016.
\newblock Guaranteed non-convex optimization: Submodular maximization over
  continuous domains.
\newblock {\em CoRR} abs/1606.05615.

\bibitem[\protect\citeauthoryear{Buchbinder and Feldman}{2016}]{Buchbinder2015}
Buchbinder, N., and Feldman, M.
\newblock 2016.
\newblock Deterministic algorithms for submodular maximization problems.
\newblock In {\em SODA},  392--403.

\bibitem[\protect\citeauthoryear{Buchbinder \bgroup et al\mbox.\egroup
  }{2012}]{Buchbinder:2012hc}
Buchbinder, N.; Feldman, M.; Naor, J.~S.; and Schwartz, R.
\newblock 2012.
\newblock A tight linear time $(1/2)$-approximation for unconstrained
  submodular maximization.
\newblock In {\em FOCS},  649--658.

\bibitem[\protect\citeauthoryear{Buchbinder \bgroup et al\mbox.\egroup
  }{2014}]{Buchbinder2014}
Buchbinder, N.; Feldman, M.; Naor, J.~S.; and Schwartz, R.
\newblock 2014.
\newblock Submodular maximization with cardinality constraints.
\newblock In {\em SODA},  1433--1452.

\bibitem[\protect\citeauthoryear{Chekuri, Vondr{\'a}k, and
  Zenklusen}{2014}]{Chekuri:2014ed}
Chekuri, C.; Vondr{\'a}k, J.; and Zenklusen, R.
\newblock 2014.
\newblock Submodular function maximization via the multilinear relaxation and
  contention resolution schemes.
\newblock {\em SIAM Journal on Computing} 43(6):1831--1879.

\bibitem[\protect\citeauthoryear{Devroye}{1986}]{Devroye:1986zz}
Devroye, L.
\newblock 1986.
\newblock {\em Non-Uniform Random Variate Generation}.
\newblock Springer.

\bibitem[\protect\citeauthoryear{Ene and
  Nguyen}{2016}]{DBLP:journals/corr/EneN16}
Ene, A., and Nguyen, H.~L.
\newblock 2016.
\newblock A reduction for optimizing lattice submodular functions with
  diminishing returns.
\newblock {\em CoRR} abs/1606.08362.

\bibitem[\protect\citeauthoryear{Feige, Mirrokni, and
  Vondrak}{2011}]{Feige2011}
Feige, U.; Mirrokni, V.~S.; and Vondrak, J.
\newblock 2011.
\newblock Maximizing non-monotone submodular functions.
\newblock {\em SIAM J. on Comput.} 40(4):1133--1153.

\bibitem[\protect\citeauthoryear{Fujishige}{2005}]{Fujishige2005}
Fujishige, S.
\newblock 2005.
\newblock {\em Submodular Functions and Optimization}.
\newblock Elsevier, 2nd edition.

\bibitem[\protect\citeauthoryear{Golovin and Krause}{2011}]{Golovin:2011cn}
Golovin, D., and Krause, A.
\newblock 2011.
\newblock Adaptive submodularity: Theory and applications in active learning
  and stochastic optimization.
\newblock {\em Journal of Artificial Intelligence Research}  427--486.

\bibitem[\protect\citeauthoryear{Gotovos, Karbasi, and
  Krause}{2015}]{Gotovos2015}
Gotovos, A.; Karbasi, A.; and Krause, A.
\newblock 2015.
\newblock Non-monotone adaptive submodular maximization.
\newblock In {\em AAAI},  1996--2003.

\bibitem[\protect\citeauthoryear{Gottschalk and Peis}{2015}]{Gottschalk:2015fq}
Gottschalk, C., and Peis, B.
\newblock 2015.
\newblock Submodular function maximization on the bounded integer lattice.
\newblock In {\em WAOA},  133--144.

\bibitem[\protect\citeauthoryear{Gupta \bgroup et al\mbox.\egroup
  }{2010}]{Gupta:2010wj}
Gupta, A.; Roth, A.; Schoenebeck, G.; and Talwar, K.
\newblock 2010.
\newblock Constrained non-monotone submodular maximization: offline and
  secretary algorithms.
\newblock In {\em WINE},  246--257.

\bibitem[\protect\citeauthoryear{Hartline, Mirrokni, and
  Sundararajan}{2008}]{Hartline:2008fw}
Hartline, J.; Mirrokni, V.; and Sundararajan, M.
\newblock 2008.
\newblock Optimal marketing strategies over social networks.
\newblock In {\em WWW},  189--198.

\bibitem[\protect\citeauthoryear{Ko, Lee, and Queyranne}{1995}]{Ko:1995hk}
Ko, C.~W.; Lee, J.; and Queyranne, M.
\newblock 1995.
\newblock An exact algorithm for maximum entropy sampling.
\newblock {\em Operations Research}  684--691.

\bibitem[\protect\citeauthoryear{Krause and Golovin}{2014}]{Krause2014survey}
Krause, A., and Golovin, D.
\newblock 2014.
\newblock Submodular function maximization.
\newblock In {\em Tractability: Practical Approaches to Hard Problems}.
  Cambridge University Press.
\newblock  71--104.

\bibitem[\protect\citeauthoryear{Krause and Horvitz}{2008}]{Krause2008utility}
Krause, A., and Horvitz, E.
\newblock 2008.
\newblock A utility-theoretic approach to privacy and personalization.
\newblock In {\em AAAI},  1181--1188.

\bibitem[\protect\citeauthoryear{Krause, Singh, and
  Guestrin}{2008}]{Krause2008a}
Krause, A.; Singh, A.; and Guestrin, C.
\newblock 2008.
\newblock Near-optimal sensor placements in gaussian processes: theory,
  efficient algorithms and empirical studies.
\newblock {\em The Journal of Machine Learning Research} 9:235--284.

\bibitem[\protect\citeauthoryear{Kunegis}{2013}]{KONECT}
Kunegis, J.
\newblock 2013.
\newblock Konect - the koblenz network collection.
\newblock In {\em WWW Companion},  1343--1350.

\bibitem[\protect\citeauthoryear{Lee \bgroup et al\mbox.\egroup
  }{2009}]{Lee:2009tc}
Lee, J.; Mirrokni, V.~S.; Nagarajan, V.; and Sviridenko, M.
\newblock 2009.
\newblock Non-monotone submodular maximization under matroid and knapsack
  constraints.
\newblock In {\em STOC},  323--332.

\bibitem[\protect\citeauthoryear{Milgrom and Strulovici}{2009}]{Milgrom2009}
Milgrom, P., and Strulovici, B.
\newblock 2009.
\newblock Substitute goods, auctions, and equilibrium.
\newblock {\em Journal of Economic Theory} 144(1):212--247.

\bibitem[\protect\citeauthoryear{Mirzasoleiman \bgroup et al\mbox.\egroup
  }{2016}]{Mirzasoleiman:2016vp}
Mirzasoleiman, B.; CH, E.; Karbasi, A.; and EDU, Y.
\newblock 2016.
\newblock Fast constrained submodular maximization: Personalized data
  summarization.
\newblock In {\em ICLM '16: Proceedings of the 33rd International Conference on
  Machine Learning (ICML)}.

\bibitem[\protect\citeauthoryear{Soma and Yoshida}{2015}]{sfcover:nips2015}
Soma, T., and Yoshida, Y.
\newblock 2015.
\newblock A generalization of submodular cover via the diminishing return
  property on the integer lattice.
\newblock In {\em NIPS},  847--855.

\bibitem[\protect\citeauthoryear{Soma and Yoshida}{2016}]{sfm:arxiv2015}
Soma, T., and Yoshida, Y.
\newblock 2016.
\newblock Maximizing submodular functions with the diminishing return property
  over the integer lattice.
\newblock In {\em IPCO},  325--336.

\bibitem[\protect\citeauthoryear{Soma \bgroup et al\mbox.\egroup
  }{2014}]{Soma:2014tp}
Soma, T.; Kakimura, N.; Inaba, K.; and Kawarabayashi, K.
\newblock 2014.
\newblock Optimal budget allocation: Theoretical guarantee and efficient
  algorithm.
\newblock In {\em ICML},  351--359.

\end{thebibliography}

\appendix

\section{Proof of Theorem~\ref{the:truely-polynomial}}\label{sec:proof-truely-polynomial}

A key observation to obtain an approximation algorithm with a polynomial time complexity is that the approximate functions $\tilde{g}$ and $\tilde{h}$ used in Algorithm~\ref{alg:polynomial-oracle} are piecewise constant functions.
Hence, while $\bx(e)$ and $\by(e)$ lie on the intervals for which $\tilde{g}$ and $\tilde{h}$, respectively, are constant, the values of $\alpha$ and $\beta$ do not change.
This means that we repeat the same random process in the while loop of Algorithm~\ref{alg:polynomial-oracle} as long as $\bx(e)$ and $\by(e)$ lie on the intervals.
We will show that we can simulate the entire random process in polynomial time.
Because the number of possible values of $\tilde{g}$ and $\tilde{h}$ is bounded by $O(\frac{1}{\epsilon}\log\frac{\Delta}{\delta})$, we obtain a polynomial-time algorithm.

As the model of computation, we assume that we can perform an elementary arithmetic operation on real numbers in constant time, and that we can sample a uniform $[0,1]$ random variable.

The first two ingredients for simulating the random process are sampling procedures for a binomial distribution and a geometric distribution.
For $n \in \bbN$ and $p \in [0,1]$, let $\caB(n,p)$ be the binomial distribution with mean $np$ and variance $np(1-p)$.
For $p \in [0,1]$, let $\caG(p)$ be the geometric distribution with mean $1/p$, that is, $\Pr_{X \sim \caG(p)}[X = k] = (1-p)^{k-1}p$ for $k \geq 1$.
We then have the following:
\begin{lemma}[See, e.g.,~\cite{Devroye:1986zz}]\label{lem:sample-binomial}
  For any $n \in \bbN$, and $p \in [0,1]$, we can sample a value from the binomial distribution $\caB(n,p)$ in $O(\log n)$ time.
\end{lemma}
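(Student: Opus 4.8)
The plan is to reduce binomial sampling to repeated sampling of a single continuous variate by exploiting the order-statistic structure of the uniform distribution, yielding a divide-and-conquer recursion of depth $O(\log n)$. Recall that if $U_1,\dots,U_n$ are i.i.d.\ uniform on $[0,1]$, then $X := \abs{\set{i : U_i \le p}}$ is distributed as $\caB(n,p)$. Rather than drawing all $n$ uniforms (which costs $O(n)$), I would draw a single order statistic and recurse on the two subintervals it induces.

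Concretely, set $k = \ceil{n/2}$ and draw the $k$-th smallest value $V$ of the (unrealized) $U_1,\dots,U_n$; it is a standard fact that $V \sim \mathrm{Beta}(k, n-k+1)$. Conditioned on $V = v$, the $k-1$ values below $V$ are i.i.d.\ uniform on $[0,v]$ and the $n-k$ values above $V$ are i.i.d.\ uniform on $[v,1]$, independently of each other. Counting how many of the $U_i$ fall below the threshold $p$ then splits into two cases according to whether $v \le p$.

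If $v \le p$, then $V$ together with all $k-1$ lower values lie below $p$, while the number of upper values below $p$ is $\caB\bigl(n-k, \frac{p-v}{1-v}\bigr)$; hence $X = k + \caB\bigl(n-k, \frac{p-v}{1-v}\bigr)$. If $v > p$, then $V$ and all $n-k$ upper values exceed $p$, while the lower count is $\caB(k-1, p/v)$; hence $X = \caB(k-1, p/v)$. In both cases the residual problem is again a binomial whose integer first parameter is at most $\ceil{n/2}$ and whose success probability is readily checked to lie in $[0,1]$. Thus the recursion depth is deterministically $O(\log n)$, bottoming out at $n \in \set{0,1}$, where we return $0$ or a single $\mathrm{Bernoulli}(p)$ draw obtained from one uniform.

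It remains to bound the per-level cost, and this is where the one genuinely nontrivial ingredient enters. Each level performs a constant number of arithmetic operations plus one draw from a Beta distribution with integer parameters. The main obstacle is therefore to generate such a Beta variate in expected $O(1)$ time, uniformly over its parameters; this is precisely what rejection-based continuous-variate generators (e.g.\ Cheng's method) provide, and is exactly the result supplied by \cite{Devroye:1986zz} under our model of constant-time arithmetic and access to uniform samples. Granting this, the total expected running time is the recursion depth times $O(1)$, i.e.\ $O(\log n)$, as claimed. The only further points to verify in a full proof are the conditional-independence identity for uniform order statistics and the well-definedness of the recursively passed probabilities in $[0,1]$, both of which are elementary given the case analysis above.
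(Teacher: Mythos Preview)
The paper does not prove this lemma; it is stated with a bare citation to Devroye's monograph. Your proposal supplies precisely one of the arguments found there --- this is essentially the \emph{recursive method} for binomial generation (Chapter~X.4 of the cited book). The order-statistic/Beta recursion you describe is correct: the conditional distribution of the sub-samples above and below the $k$th order statistic is a textbook property of uniform order statistics, each recursive branch at most halves $n$ (since $n-k \le \lfloor n/2\rfloor$ and $k-1 \le \lfloor n/2\rfloor$), and Cheng-type rejection samplers for $\mathrm{Beta}(a,b)$ with $a,b\ge 1$ have uniformly bounded expected acceptance cost under the real-arithmetic model the paper assumes.

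One small point worth flagging: your argument yields \emph{expected} $O(\log n)$ time, not worst-case, since the Beta step relies on rejection. The lemma as written does not draw this distinction, and for the paper's purposes --- it is invoked only inside a randomized subroutine whose overall guarantee is already probabilistic --- the expected bound is all that is needed. If a worst-case reading were intended, your sketch would need an additional ingredient.
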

\begin{lemma}[See, e.g.,~\cite{Devroye:1986zz}]\label{lem:sample-geometric}
  For any $p \in [0,1]$, we can sample a value from the geometric distribution $\caG(p)$ in $O(1)$ time.
\end{lemma}

\begin{algorithm}[t]
  \caption{Subroutine to simulate random processes for Algorithm~\ref{alg:poly}}\label{alg:quick-simulation}
  \begin{algorithmic}[1]
    \REQUIRE{$p \in [0,1]$, $\ell_a,\ell_b,\ell_{a+b} \in \bbZ_+$, $\eta \in (0,1)$.}
    \STATE $q \leftarrow 1-p$.
    \STATE $N \leftarrow O\Bigl(\log \bigl(\frac{1}{\eta}\log (\ell_a+\ell_b+\ell_{a+b})\bigr)\Bigr)$.
    \STATE $a \leftarrow 0$, $b \leftarrow 0$.
    \WHILE{$a < \ell_a$, $b < \ell_b$, and $a+b<\ell_{a+b}$}\label{line:quick-simulation-while}
      \WHILE{$\ell_a -a \leq N$ or $\ell_{a+b} - (a+b) < N$}\label{line:quick-simulation-while-a-or-a+b}
        \STATE $s \leftarrow$ a value sampled from $\caG(q)$.
        \IF{$b + s \leq \ell_b$ and $a+b + s\leq \ell_{a+b}$}
          \STATE $a \leftarrow a + 1$, $b \leftarrow b + s$.
        \ELSE
          \RETURN $(a, \min\set{\ell_b, \ell_{a+b}-a})$.
        \ENDIF
      \ENDWHILE
      \WHILE{$\ell_b -b \leq N$}\label{line:quick-simulation-while-b}
        \STATE $s \leftarrow$ a value sampled from $\caG(p)$.
        \IF{$a + s \leq \ell_a$ and $a+b + s\leq \ell_{a+b}$}
          \STATE $a \leftarrow a + s$, $b \leftarrow b + 1$.
        \ELSE
          \RETURN $(\min\set{\ell_a, \ell_{a+b}-b},b)$.
        \ENDIF
      \ENDWHILE
      \STATE $n \leftarrow \min \set{\floor{\frac{\ell_a-a}{p}}, \floor{\frac{\ell_b-b}{q}}, \ell_{a+b}-(a+b)}$, $m \leftarrow \floor{n/2}$.      \label{line:quick-simulation-else}
      \STATE $s \leftarrow $ a value sampled from $\caB(m,p)$.
      \STATE $a \leftarrow a + s$, $b \leftarrow b + m-s$.
      \IF{$a > \ell_a$, $b>\ell_b$, or $a+b > \ell_{a+b}$}
      \STATE \textbf{Fail}.
      \ENDIF
    \ENDWHILE
    \RETURN{$(a,b)$.}
  \end{algorithmic}
\end{algorithm}

We consider the following random process parameterized by $p \in [0,1]$ and integers $\ell_a,\ell_b,\ell_{a+b} \in \bbZ_+$, which we denote by $\caP(p,\ell_a,\ell_b,\ell_{a+b})$:
We start with $a = b = 0$.
While $a < \ell_a$, $b < \ell_b$, and $a+b< \ell_{a+b}$, we increment $a$ with probability $p$, and $b$ with the complement probability $1-p$.
Note that, in the end of the process, we have $a = \ell_a$, $b = \ell_b$, or $a+b=\ell_{a+b}$.
Let $\caD(p,\ell_a,\ell_b,\ell_{a+b})$ be the distribution of the pair $(a,b)$ generated by $\caP(p,\ell_a,\ell_b,\ell_{a+b})$.

We introduce an efficient procedure (Algorithm~\ref{alg:quick-simulation}) that succeeds in simulating the process $\caP(p,\ell_a,\ell_b,\ell_{a+b})$ with high probability.
To prove the correctness of Algorithm~\ref{alg:quick-simulation}, we use the following form of Chernoff's bound.
\begin{lemma}[Chernoff's bound]\label{lem:chernoff}
  Let $X_1,\ldots,X_n$ be independent random variables taking values in $\set{0,1}$.
  Let $X = \sum_{i=1}^n X_i$ and $\mu = \E[X]$.
  Then, for any $\delta > 1$, we have
  \[
    \Pr[X \geq (1+\delta)\mu] \leq \exp(-\delta \mu/3).
  \]
\end{lemma}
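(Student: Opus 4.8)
The plan is to apply the standard exponential-moment (Chernoff) method to control the upper tail, and then reduce the resulting classical bound to the stated clean form via a single-variable inequality.

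First I would fix a parameter $t > 0$ and apply Markov's inequality to the nonnegative random variable $e^{tX}$:
\[
\Pr[X \geq (1+\delta)\mu] = \Pr[e^{tX} \geq e^{t(1+\delta)\mu}] \leq e^{-t(1+\delta)\mu}\,\E[e^{tX}].
\]
Since the $X_i$ are independent, $\E[e^{tX}] = \prod_{i=1}^n \E[e^{tX_i}]$. Writing $p_i = \Pr[X_i = 1]$, each factor equals $1 + p_i(e^t - 1)$, and using the elementary bound $1 + x \leq e^x$ I would bound it by $\exp(p_i(e^t-1))$. Multiplying these factors and using $\sum_i p_i = \mu$ yields $\E[e^{tX}] \leq \exp((e^t - 1)\mu)$, hence
\[
\Pr[X \geq (1+\delta)\mu] \leq \exp\bigl(\mu[(e^t - 1) - t(1+\delta)]\bigr).
\]

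Next I would optimize the exponent over $t$. The inner expression $\phi(t) = (e^t - 1) - t(1+\delta)$ satisfies $\phi'(t) = e^t - (1+\delta)$ and $\phi''(t) = e^t > 0$, so it is minimized at $t = \ln(1+\delta)$, which is positive because $\delta > 1$. Substituting this choice collapses the bound to the classical tail estimate
\[
\Pr[X \geq (1+\delta)\mu] \leq \exp\bigl(\mu[\delta - (1+\delta)\ln(1+\delta)]\bigr).
\]

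Finally, to recover the stated form it remains to establish the transcendental inequality $\delta - (1+\delta)\ln(1+\delta) \leq -\delta/3$ for all $\delta \geq 1$; equivalently, that $\psi(\delta) := (1+\delta)\ln(1+\delta) - \tfrac{4}{3}\delta \geq 0$ on this range. I would verify $\psi(1) = 2\ln 2 - \tfrac{4}{3} > 0$ directly, and compute $\psi'(\delta) = \ln(1+\delta) - \tfrac{1}{3}$, which is positive for $\delta \geq 1$ since $\ln(1+\delta) \geq \ln 2 > \tfrac{1}{3}$. Thus $\psi$ is nonnegative at the left endpoint and nondecreasing thereafter, so $\psi(\delta) \geq 0$ throughout, giving the claim. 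The main obstacle is precisely this last single-variable calculus step: the Markov-plus-independence manipulations and the optimization over $t$ are routine, but one must check that the specific constant $1/3$ is admissible across the entire range $\delta > 1$, which is exactly what the monotonicity argument for $\psi$ secures.
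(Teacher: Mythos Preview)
Your argument is correct: the exponential-moment method with the optimal choice $t=\ln(1+\delta)$ gives the classical bound $\exp(\mu[\delta-(1+\delta)\ln(1+\delta)])$, and your calculus verification that $(1+\delta)\ln(1+\delta)\ge \tfrac{4}{3}\delta$ for $\delta\ge 1$ (via $\psi(1)>0$ and $\psi'(\delta)=\ln(1+\delta)-\tfrac13>0$) cleanly yields the stated $\exp(-\delta\mu/3)$.

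As for comparison with the paper: the paper does not actually prove this lemma. It is quoted as a standard form of Chernoff's bound and used as a black box in the analysis of Algorithm~4. So you have supplied a full proof where the paper simply invokes the result; nothing in your approach conflicts with the paper.
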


\begin{lemma}\label{lem:quick-simulation}
  We have the following:
  \begin{itemize}
  \item Algorithm~\ref{alg:quick-simulation} succeeds to return a pair $(a,b)$ with probability at least $1-\eta$.
  \item The distribution of the pair $(a,b)$ output by Algorithm~\ref{alg:quick-simulation} is distributed according to $\caD(p,\ell_a,\ell_b,\ell_{a+b})$.
  \item The pair $(a,b)$ output by Algorithm~\ref{alg:quick-simulation} satisfies at least one of the following: $a = \ell_a$, $b = \ell_b$, or $a+b=\ell_{a+b}$.
  \item The time complexity of Algorithm~\ref{alg:quick-simulation} is $O\bigl(\log (\frac{1}{\eta}\log\ell) \cdot \log \ell\bigr)$, where $\ell = \ell_a+\ell_b+\ell_{a+b}$.
  \end{itemize}
\end{lemma}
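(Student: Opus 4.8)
The plan is to prove all four assertions at once by coupling Algorithm~\ref{alg:quick-simulation} with the ``slow'' process $\caP(p,\ell_a,\ell_b,\ell_{a+b})$, realized as a single i.i.d.\ stream $X_1,X_2,\dots$ of $\mathrm{Bernoulli}(p)$ bits that drives the walk: at step $i$ we increment $a$ if $X_i=1$ and $b$ otherwise, stopping once $a=\ell_a$, $b=\ell_b$, or $a+b=\ell_{a+b}$. The point is that each of the three kinds of moves Algorithm~\ref{alg:quick-simulation} makes is a \emph{faithful compression} of a consecutive block of this stream: a geometric draw $s\sim\caG(q)$ in the first inner loop fast-forwards the walk to (and through) its next $a$-increment, and the accompanying test resolves, via the \textbf{return}, the case where the run of $b$-increments would first reach $\ell_b$ or $\ell_{a+b}$, halting the walk at $b=\min\set{\ell_b,\ell_{a+b}-a}$; the second inner loop is the symmetric construction with $a$ and $b$ interchanged; and a draw $s\sim\caB(m,p)$ encodes the number of $1$'s among the next $m$ bits. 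Hence, as long as no binomial block ``overshoots'' a boundary, composing the moves (over disjoint blocks of the stream) reconstructs exactly the terminal state of $\caP$. I would conclude that, conditioned on the algorithm not outputting \textbf{Fail}, its output equals that of $\caP$, which gives the second and third bullets at once (the terminal state of $\caP$ satisfies one of $a=\ell_a$, $b=\ell_b$, $a+b=\ell_{a+b}$ by definition; the third bullet can also be read off the \textbf{return} statements directly). The point I would be careful about here is that ``the endpoint $(a+s,\,b+m-s)$ together with $a+b+m$ all lie within the bounds'' is equivalent to ``$\caP$ does not terminate inside that block'': this holds because $a$, $b$, and $a+b$ are each monotone along the block, so an interior crossing of, say, $\ell_a$ forces $a+s\ge\ell_a$ at the block's end, while a boundary reached \emph{exactly} at the block's end is caught correctly by the outer \textbf{while} test.

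For the first bullet, the above isolates a single source of failure: a binomial block with $a+s>\ell_a$, $b+(m-s)>\ell_b$, or $a+b+m>\ell_{a+b}$. The last never occurs, since $m=\floor{n/2}\le(\ell_{a+b}-(a+b))/2$ forces $a+b+m<\ell_{a+b}$. For the first, a binomial block is reached only after both inner loops have exited, which forces $\ell_a-a>N$; and from $n\le\floor{(\ell_a-a)/p}$ we get $\mu:=\E[s]=mp\le np/2\le(\ell_a-a)/2$. Writing the overshoot event as $s\ge(1+\delta)\mu$ with $(1+\delta)\mu=\ell_a-a+1$, we have $\delta>1$ and, crucially, $\delta\mu=(\ell_a-a+1)-\mu\ge(\ell_a-a)/2>N/2$, so Lemma~\ref{lem:chernoff} bounds the probability of an $a$-overshoot in this block by $\exp(-\delta\mu/3)\le e^{-N/6}$; the $b$-overshoot is symmetric, using $n\le\floor{(\ell_b-b)/q}$ and $\ell_b-b>N$. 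Thus each binomial block fails with probability at most $2e^{-N/6}$. To turn this into the stated bound I would show that the expected number of binomial blocks is $O(\log\ell)$: if the binding constraint is $\ell_{a+b}-(a+b)$ its slack is exactly halved by the block, while if it is $(\ell_a-a)/p$ (or $(\ell_b-b)/q$) then $\E[s]=mp\approx(\ell_a-a)/2$ halves the corresponding slack in expectation — and since $mp\ge N/2$ is not small, the halving in fact happens with high probability — so the potential $\log_2(\ell_a-a+1)+\log_2(\ell_b-b+1)+\log_2(\ell_{a+b}-(a+b)+1)$, which starts at $O(\log\ell)$, drops by $\Omega(1)$ in expectation each block. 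Summing $\Pr[\text{reach block }k]\cdot\Pr[\text{fail at }k\mid\text{reach }k]$ over $k$ then bounds the total failure probability by $O(\log\ell)\cdot e^{-N/6}$, which is at most $\eta$ for the stated $N=\Theta(\log(\tfrac1\eta\log\ell))$.

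For the running time I would note that each geometric draw costs $O(1)$ (Lemma~\ref{lem:sample-geometric}) and each binomial draw costs $O(\log\ell)$ (Lemma~\ref{lem:sample-binomial}, since all parameters are at most $\ell$); that the algorithm performs $O(\log\ell)$ binomial blocks (by the argument above, and noting that once either inner loop is entered the slack it tracks only decreases, so control stays in that loop until a \textbf{return}); and that the single pass through one inner loop performs $O(N)$ geometric draws, since each of its iterations decreases a quantity that started below $N+1$ by at least one. Multiplying out and absorbing polylog factors gives the stated $\widetilde{O}\bigl(\log(\tfrac1\eta\log\ell)\cdot\log\ell\bigr)$ bound.

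The step I expect to be the main obstacle is the failure analysis — specifically, getting a per-block bound of the form $e^{-\Omega(N)}$ that is \emph{uniform in $p$}, since when $p$ is tiny the mean $\mu=mp$ can itself be small and a purely multiplicative Chernoff bound in $\delta$ alone would be vacuous. The resolution I would lean on is that what actually drives the bound is not the multiplicative slack but the \emph{additive} slack $\delta\mu=(\ell_a-a)-\mu\ge(\ell_a-a)/2>N/2$, which is large simply because $\ell_a-a$ is a large integer regardless of $p$; this, together with the $O(\log\ell)$ bound on the expected number of blocks, is exactly what makes the economical choice $N=\Theta(\log(\tfrac1\eta\log\ell))$ suffice rather than the naive $N=\Theta(\log(\ell/\eta))$. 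A secondary, more routine subtlety is keeping the coupling honest at block boundaries — the ``endpoint in bounds $\iff$ no interior crossing'' equivalence — which rests on the monotonicity of $a$, $b$, and $a+b$ noted above.
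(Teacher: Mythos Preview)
Your approach is essentially the same as the paper's: both (i) couple the algorithm with the underlying Bernoulli stream to argue that, conditioned on not failing, the output distribution is exactly $\caD(p,\ell_a,\ell_b,\ell_{a+b})$, (ii) bound the per-binomial-block failure probability via Chernoff's bound using the additive slack $\delta\mu\ge(\ell_a-a)/2>N/2$ (the paper writes this as $\exp(-(\ell_a-a-pm)/3)\le\exp(-(\ell_a-a)/6)$, which is the same computation), (iii) argue there are $O(\log\ell)$ binomial blocks via a halving argument, and (iv) union-bound over blocks to get overall failure probability at most $\eta$. For the third and fourth bullets the paper simply says they are ``obvious from the definition of the algorithm''; your more explicit accounting is consistent with that.

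You are in fact more careful than the paper in one place: the paper asserts that ``at least one of the following three values shrinks by half: $\ell_a-a$, $\ell_b-b$, and $\ell_{a+b}-(a+b)$'' after each non-failing binomial block, as if this were deterministic. When the binding constraint in $n$ is $\ell_{a+b}-(a+b)$ this is indeed deterministic (since $a+b$ advances by exactly $m=\lfloor n/2\rfloor$), but when the binding constraint is $\lfloor(\ell_a-a)/p\rfloor$ or $\lfloor(\ell_b-b)/q\rfloor$ the corresponding slack only halves in expectation. You correctly flag this and patch it with a potential/high-probability argument, which is what is actually needed to make the $O(\log\ell)$ block count rigorous.
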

\begin{proof}
  We first note that once (i) $\ell_a -a \leq N$, (ii) $\ell_b - b \leq N$, or (iii) $\ell_{a+b} - (a+b) \leq N$ holds, then we enter the while loop from Line~\ref{line:quick-simulation-while-a-or-a+b} or from~\ref{line:quick-simulation-while-b} until the end of the algorithm.

  We check the first claim.
  Suppose that none of~(i),~(ii), and~(iii) holds.
  Then, we reach Line~\ref{line:quick-simulation-else}.
  Here, we intend to simulate the process $\caP(p,\ell_a,\ell_b,\ell_{a+b})$ to a point where we will increment $a$ and $b$ $m = \floor{n/2}$ times in total.
  By union bound and Chernoff's bound, the probability that we fail can be bounded by
  \begin{align*}
    & \Pr\Bigl[s \geq \ell_a - a
    \vee
    s \geq \ell_b - b
    \vee
    s \geq \ell_{a+b} - (a +b)\Bigr] \\
    \leq &
    \Pr[s \geq \ell_a - a]
    +
    \Pr[s \geq \ell_b - b]
    +
    \Pr[s \geq \ell_{a+b} - (a +b)] \\
    \leq &
    \exp\Bigl(-\frac{\ell_a - a - pm}{3}\Bigr)
    + \exp\Bigl(-\frac{\ell_b - b - qm}{3}\Bigr)  \\
    & \quad + \exp(-\ell_{a+b} - (a+b) - m)\\
    \leq &
    \exp\Bigl(-\frac{\ell_a - a}{6}\Bigr)
    + \exp\Bigl(-\frac{\ell_b - b}{6}\Bigr) \\
    & \quad + \exp\Bigl(-\frac{\ell_{a+b} - (a+b)}{2}\Bigr)\\
    \leq & 3\exp(-O(N)).
  \end{align*}

  When we do not fail, at least one of the following three values shrinks by half: $\ell_a-a$, $\ell_b-b$, and $\ell_{a+b}-(a+b)$.
  Hence, after $O(\log (\ell_a+\ell_b+\ell_{a+b}))$ iterations of the while loop (from Line~\eqref{line:quick-simulation-while}), at least one of~(i),~(ii), or~(iii) is satisfied.
  Once this happens, we do not fail and output a pair $(a,b)$.
  By union bound, the failure probability is at most
  \[
    3\exp(-O(N))\cdot O(\log(\ell_a+\ell_b+\ell_{a+b}))
    \leq \delta
  \]
  By choosing the hidden constant in $N$ large enough.

  Next, we check the second claim.
  From the argument above, as long as none of (i),~(ii), or~(iii) is satisfied, we exactly simulate the process $\caP(p,\ell_a,\ell_b,\ell_{a+b})$.
  Hence, suppose that (i) is satisfied.
  Then, what we does is until $a$ reaches $\ell_a$, we sample $s$ from the geometric distribution $\caG(q) = \caG(1-p)$, and if $b + s \leq \ell_b$ and $a+b+s \leq \ell_{a+b}$, then we update $a$ by $a+1$ and $b$ by $b+s$, and otherwise we output the pair $(a, b = \min\set{\ell_b,\ell_{a+b}-a})$.
  If $a$ reaches $\ell_a$, then we output the pair $(a = \ell_a,b)$.
  This can be seen as an efficient simulation of the process $\caP(p,\ell_a,\ell_b,\ell_{a+b})$.
  The case (ii) or (iii) is satisfied can be analyzed similarly, and the second claim holds.

  The third and fourth claims are obvious from the definition of the algorithm.
\end{proof}

\begin{algorithm}[t]
  \caption{Polynomial time approximation algorithm}\label{alg:poly}
  \begin{algorithmic}[1]
    \REQUIRE{$f:\bbZ_+^E \to \bbR_+$, $\bB \in \bbZ^E$, $\epsilon > 0$.}
    \ENSURE{$f$ is DR-submodular}
    \STATE $\bx \leftarrow \bzero$, $\by \leftarrow \bB$.
    \FOR{$e \in E$}
      \STATE Define $g,h:\set{0,1,\ldots,B-1} \to \bbR$ as $g(b) = f(\bchi_e \mid \bx+b\bchi_e)$ and $h(b) = f(- \bchi_e \mid \by-b\bchi_e)$. \hspace{-1em}
      \STATE Let $\tilde{g}$ and $\tilde{h}$ be approximations to $g$ and $h$, respectively, given by Algorithm~\ref{alg:approximate}.
      \WHILE{$\bx(e) < \by(e)$}
        \STATE $\alpha \leftarrow \tilde{g}(\bx(e))$ and $\beta \leftarrow \tilde{h}(\bB(e) - \by(e))$.
        \IF{$\beta = 0$}
          \STATE $\bx(e) \leftarrow \by(e)$ and \textbf{break}.
        \ELSIF{$\alpha < 0$}
          \STATE $\by(e) \leftarrow \bx(e)$ and \textbf{break}.
        \ELSE
        \STATE $s \leftarrow \max \set{b \mid \tilde{g}(b)=\alpha}$, $t \leftarrow \bB(e) - \max \set{b \mid \tilde{h}(b)=\beta}$.
        \STATE Call Algorithm~\ref{alg:quick-simulation} with $p = \frac{\alpha}{\alpha+\beta}$, $\ell_a = s - \bx(e)$, $\ell_b = \by(e) - t$, $\ell_{a+b} = \by(e)-\bx(e)$, and $\eta = O(\frac{\epsilon^2}{(2+\epsilon)|E|\log(\Delta/\delta)})$.\label{line:poly-call-quick-simulation}
        \IF{Algorithm~\ref{alg:quick-simulation} returned a pair $(a,b)$}
          \STATE $\bx(e) \leftarrow \bx(e) + a$, $\by(e) \leftarrow \by(e) + b$.
        \ELSE
          \RETURN $\bzero$.
        \ENDIF
        \ENDIF
      \ENDWHILE
    \ENDFOR
    \RETURN $\bx$.
  \end{algorithmic}
\end{algorithm}

Our idea for simulating Algorithm~\ref{alg:polynomial-oracle} efficiently is as follows.
Suppose we have $\alpha = \tilde{g}(\bx(e))$ and $\beta = \tilde{h}(\bB(e)-\by(e))$ for the current $\bx$ and $\by$.
Let $s = \max \set{b \mid \tilde{g}(b)=\alpha}$ and $t = \bB(e) - \max \set{b \mid \tilde{h}(b)=\beta}$.
Then, $\tilde{g}$ and $\tilde{h}$ are constant in the intervals $[\bx(e),\ldots,s]$ and $[\bB(e) - t,\ldots,\bB(e)-\by(e)]$, respectively.
By running Algorithm~\ref{alg:quick-simulation} with $p = \alpha/(\alpha+\beta)$, $\ell_a = s - \bx(e)$, $\ell_b = t - \by(e)$, and $\ell_{a+b} = \by(e) - \bx(e)$, we can simulate Algorithm~\ref{alg:polynomial-oracle} to a point where at least one of the following happens: $\bx(e)$ reaches $s$, $\by(e)$ reaches $t$, or $\bx(e)$ is equal to $\by(e)$.
When Algorithm~\ref{alg:quick-simulation} failed to output a pair, we output an arbitrary feasible solution, say, the zero vector $\bzero$.
Algorithm~\ref{alg:poly} presents a formal description of the algorithm.





\begin{proof}[Proof of Theorem~\ref{the:truely-polynomial}]
  We first analyze the failure probability.
  Since the number of possible values of $\tilde{g}$ and $\tilde{h}$ is bounded by $O(\frac{1}{\epsilon}\log\frac{\Delta}{\delta})$ for each $e \in E$, we call Algorithm~\ref{alg:quick-simulation} $O(\frac{|E|}{\epsilon}\log\frac{\Delta}{\delta})$ times by the third claim of Lemma~\ref{lem:quick-simulation}.
  Hence, by the first claim of Lemma~\ref{lem:quick-simulation} and union bound, the failure probability is at most $\frac{\epsilon}{2+2\epsilon}$ if the hidden constant in $\eta$ at Line~\ref{line:poly-call-quick-simulation} is chosen to be small enough.

  Let $\caD$ and $\caD'$ be the distributions of outputs from Algorithms~\ref{alg:polynomial-oracle} and~\ref{alg:poly}, respectively.
  Conditioned on the event that Algorithm~\ref{alg:quick-simulation} does not fail (and hence we output $f(\bzero)$), $\caD$ exactly matches $\caD'$ by the second claim of Lemma~\ref{lem:quick-simulation}.

  By letting $\bo$ be the optimal solution, we have, by Theorem~\ref{the:polynomial-oracle}, that
  \[
    \E_{\bx \sim \caD'}f(\bx) \geq \frac{2+\epsilon}{2+2\epsilon} \cdot \E_{\bx \sim \caD}f(\bx) \geq \frac{1}{2+2\epsilon}f(\bo).
  \]
  Hence, we have the approximation factor of $\frac{1}{2+\epsilon}$.

  The number of oracle calls is exactly the same as Algorithm~\ref{alg:polynomial-oracle}.
  The total time spent inside Algorithm~\ref{alg:quick-simulation} is
  \begin{align*}
    & O\Bigl(\frac{|E|}{\epsilon}\log\frac{\Delta}{\delta}\Bigr) \cdot O\Bigl(\log\Bigl(\frac{|E|}{\epsilon}\log\frac{\Delta}{\delta} \log\|\bB\|_\infty\Bigr) \log\|\bB\|_\infty\Bigr) \\
    & =
    \widetilde{O}\Bigl(\frac{|E|}{\epsilon}\log\frac{\Delta}{\delta}\log\|\bB\|_\infty\Bigr)
  \end{align*}
  by the fourth claim of Lemma~\ref{lem:quick-simulation}.
  Because evaluating $\tilde{g}$ and $\tilde{h}$ takes $O(\log \|\bB\|_\infty)$ time and computing $s$ and $t$ also takes $O(\log \|\bB\|_\infty)$ time, we need
  \[
    O\Bigl(\frac{|E|}{\epsilon}\log\frac{\Delta}{\delta}\Bigr) \cdot O(\log^2 \|\bB\|_\infty)
    =
    O\Bigl(\frac{|E|}{\epsilon}\log\frac{\Delta}{\delta}\log^2 \|\bB\|_\infty\Bigr)
  \]
  time.
  Summing them up, we obtain the stated time complexity.
\end{proof}

We again note that, even if the given DR-submodular function $f:\bbZ_+^E \to \bbR$ is not non-negative, we can obtain a $\frac{1}{2+2\epsilon}$-approximation algorithm for~\eqref{eq:max-DRsfm-g}, as stated in Remark~\ref{rem:non-negative}.

\section{DR-submodularity of functions used in experiments}\label{sec:proof-of-DRsubmod}
In this section, we will see that the objective function used in Section~\ref{sec:experiments} is indeed DR-submodular.
Recall that our objective function of revenue maximization is as follows:
\begin{align*}
  f(\bx)  = \sum_{i \in S}  \sum_{j \in V \setminus S} w_{ij}(1-(1-p)^{\bx(i)})(1-p)^{\bx(j)},
\end{align*}
where $w_{ij}$ is a nonnegative weight and $p \in [0,1]$ is a parameter.
Since DR-submodular functions are closed under nonnegative linear combination, it suffices to check that
\begin{align*}
    g(\bx)  = (1-q^{\bx(i)})q^{\bx(j)}
\end{align*}
is DR-submodular, where $q = 1-p$.
To see the DR-submodularity of $g$, we need to check that
\begin{align*}
    g(\bchi_i \mid \bx + \bchi_j) &\leq g(\bchi_i \mid \bx) \quad (\bx \in \bbR^E_+, i, j \in E).
\end{align*}
Note that $i$ and $j$ may be identical.
By direct algebra,
\begin{align*}
    g(\bchi_i \mid \bx)
    &= (1 - q^{\bx(i)+1})q^{\bx(j)} - (1 - q^{\bx(i)})q^{\bx(j)} \\
    &= q^{\bx(i)+\bx(j)}(1-q), \\
    g(\bchi_i \mid \bx + \bchi_j)
    &= q^{\bx(i)+\bx(j)+1}(1-q) \\
    &= q g(\bchi_i \mid \bx).
\end{align*}
Since $q \in [0,1]$, we obtain $g(\bchi_i \mid \bx + \bchi_j) \leq g(\bchi_i \mid \bx)$.

\end{document}